\documentclass{article} 
\usepackage[hmargin=3.5cm,vmargin=4cm]{geometry}

\newcommand{\bbeta}{{\bm \beta}}
\newcommand{\bX}{{\bf X}}
\newcommand{\by}{{\bf y}}
\newcommand{\R}{\mathbb{R}}
\newcommand{\V}[2]{\mathbb{V}_{#1} \left[ #2 \right] }
\newcommand{\E}[2]{\mathbb{E}_{#1} \left[ #2 \right] }

\newcommand{\btheta}{{\bm \theta}}
\newcommand{\bEta}{{\bm \eta}}

\newcommand{\bT}{\textbf{T}}

\newcommand{\ba}{\textbf{a}}

\newcommand{\trans}{^{\rm T}}
\newcommand{\inv}{^{-1}}

\RequirePackage{amsthm,amsmath,amsfonts,amssymb}
\RequirePackage[numbers]{natbib}
\RequirePackage[colorlinks,citecolor=blue,urlcolor=blue,backref=page,backref=page]{hyperref}
\RequirePackage{graphicx}

\usepackage{enumitem}
\usepackage{changepage}
\usepackage{nicefrac}
\usepackage{bm}
\usepackage{adjustbox}
\usepackage{subcaption}
\usepackage{float}
\usepackage{threeparttable}
\usepackage{enumitem}
\usepackage{makecell}
\usepackage{boldline}
\usepackage{multirow}
\usepackage{float}
\usepackage{booktabs}
\usepackage{caption}
\usepackage{pdflscape}
\usepackage{array}
\usepackage[ruled, lined, linesnumbered, longend]{algorithm2e}
\usepackage{multicol}
\usepackage{ltablex}

\theoremstyle{plain}

\newtheorem{theorem}{Theorem}[section]

\newtheorem{proposition}[theorem]{Proposition}

\newcolumntype{C}[1]{>{\centering\let\newline\\\arraybackslash\hspace{0pt}}m{#1}}

\title{GRASP: Grouped Regression with Adaptive Shrinkage Priors}

\author{Shu Yu Tew $^1$ \\ \href{mailto:shu-yu.tew@monash.edu}{shu-yu.tew@monash.edu} 
   \and Daniel F. Schmidt $^1$ \\ \href{mailto:daniel.schmidt@monash.edu}{daniel.schmidt@monash.edu} 
   \and Mario Boley $^{1, 2}$ \\ \href{mailto:mario.boley@monash.edu}{mario.boley@monash.edu}\\ \href{mailto:mboley@is.haifa.ac.il}{mboley@is.haifa.ac.il}}

\date{%
    $^1$Department of Data Science and AI, Monash University\\%
    $^2$Department for Information Systems, University of Haifa\\[2ex]%
}

\begin{document}

\maketitle

\begin{abstract}
We introduce GRASP, a simple Bayesian framework for regression with grouped predictors, built on the normal beta prime (NBP) prior. The NBP prior is an adaptive generalization of the horseshoe prior with tunable hyperparameters that control tail behavior, enabling a flexible range of sparsity, from strong shrinkage to ridge-like regularization. Unlike prior work that introduced the group inverse-gamma gamma (GIGG) prior by decomposing the NBP prior into structured hierarchies, we show that directly controlling the tails is sufficient without requiring complex hierarchical constructions. Extending the non-tail adaptive grouped half-Cauchy hierarchy of \citet{xu2016bayesian}, GRASP assigns the NBP prior to both local and group shrinkage parameters allowing adaptive sparsity within and across groups. A key contribution of this work is a novel framework to explicitly quantify correlations among shrinkage parameters within a group, providing deeper insights into grouped shrinkage behavior. We also introduce an efficient Metropolis-Hastings sampler for hyperparameter estimation. Empirical results on simulated and real-world data demonstrate the robustness and versatility of GRASP across grouped regression problems with varying sparsity and signal-to-noise ratios.
\end{abstract}

\section{Introduction}
\label{sec:intro}

Group structures are common in regression analysis. They can appear in the form of categorical predictors represented by groups of dummy variables or in the context of additive modeling, where each predictor can be expressed as a set of basis functions forming a group; in applications such as gene expression analysis and financial market modeling, groupings exist naturally in the data. For instance, genes that influence similar traits form groups in gene expression data, while stocks from the same sector form groups in financial data. In these scenarios, group shrinkage plays an important role: when there is insufficient evidence to suggest the significance of predictors within a group, the entire group of predictors is shrunk towards zero. This reduces the noise from individual ``spurious predictors'', which tend to appear more frequently in high-dimensional settings, and decreases model complexity, thereby reducing the risk of overfitting.

Within the Bayesian framework, there has been extensive research focusing on the application of continuous shrinkage priors for linear regression problems involving group predictor variables. Traditional approaches, such as the group lasso\cite{yuan2006model, simon2013sparse}, the group bridge \cite{mallick2017bayesian}, and the group horseshoe \cite{xu2016bayesian} primarily apply shrinkage at the group level and do not consider within-group shrinkage. Given a data matrix $\bX \in \R^{n \times p}$ with $G$ groups of predictors, \citet{xu2016bayesian} introduced a hierarchical Bayesian grouped model for the corresponding target variable $\by \in \R^n$ that integrates group-level local shrinkage parameters along with the global and individual local shrinkage parameters:
\begin{align}\label{eq: Bayesian_group_regression_hierarchy}
\begin{split}
  \by \; | \; \bX, \bm{\beta}, \sigma^2 \; &\sim \; N_n\left(\bX \bbeta, \; \sigma^2\bm{I}_n\right) \\
  \sigma^2 \; &\sim \; \sigma^{-2} d\sigma^2
\end{split}\\
\begin{split} \label{eq: group_shrinkage_hier}
  \beta_{jg}|\tau^2, \lambda_{jg}^2, \delta^2_g, \sigma^2 \; &\sim \; N\left(0, \; \tau^2 \lambda^2_{jg} \delta^2_g \sigma^2\right) \\ 
    \lambda^2_j \; &\sim \; \pi(\lambda^2_j)d\lambda^2_j \\
    \delta^2_g \; &\sim \; \pi(\delta^2_g)d\delta^2_g \\
    \tau \; &\sim \; \pi(\tau)d\tau.
\end{split}
\end{align}
where $\bbeta = (\beta_1, \cdots, \beta_G) \in \R^G$ are the regression coefficients, $\delta_g$ is the local shrinkage parameter for the $g^{\rm th}$ group, controlling the shrinkage at group level, $\lambda$ represents the local shrinkage parameter that controls individual shrinkage, while $\tau$ is the global shrinkage parameter that controls the overall shrinkage. This hierarchical structure allows for simultaneous control of shrinkage between and within groups. In their approach, a half-Cauchy prior is used to independently model the local shrinkage parameter at both the individual and group levels. For a group size of one (i.e. $G=p$), this prior configuration simplifies to the horseshoe+ prior; we therefore refer to it as the group horseshoe+ prior. \citet{boss2023group} further built on this framework by proposing the group inverse-gamma gamma (GIGG) prior. This work utilizes a similar hierarchical structure but assigns inverse-gamma and gamma distributions to the individual and group local shrinkage parameters, respectively. This setup results in a normal beta prime (NBP) shrinkage prior - a generalization of the group horseshoe prior - on the regression coefficient marginally, introducing hyperparameters that control the sparsity level and allowing adaptive shrinkage to be learned from the data. However, this hierarchy has several limitations: (i) the decomposition of the beta prime prior restricts its applicability to overlapping group structures; and (ii) the proposed Gibbs sampler, which involves sampling from the generalized inverse Gaussian (GIG) distribution, can be computationally inefficient.

In this paper, we present GRASP - Grouped Regression with Adaptive Shrinkage Priors - an efficient alternative for group regression using the NBP prior within the hierarchical Bayesian grouped model proposed by \citet{xu2016bayesian}. We assign a beta prime prior for both the individual and group local shrinkage parameters. Unlike the method by \citet{boss2023group}, which ensures the product of the local shrinkage priors is a beta prime, we demonstrate that this condition is not necessary. It is not crucial for the marginal shrinkage prior on the beta coefficient to take a specific form, such as the normal beta prime.  Instead, by simply assigning the beta prime prior to both the local shrinkage parameters, we can achieve similar results. We emphasize that controlling the tail behavior of the priors by adjusting the hyperparameters is more important than restricting the exact marginal form of the shrinkage prior. This concept enables the straightforward extension of adaptive group shrinkage regression to multiple and overlapping groups for the first time (contrary to what was previously mentioned as a limitation in \cite{boss2023group}). Additionally, we examine the log-scale interpretation of the local shrinkage parameters under the beta prime prior. This logarithmic transformation offers a clear understanding of the interaction between individual and group-level shrinkage, serving as a guide for selecting appropriate shrinkage priors. It reveals how different prior choices affect the correlation between shrinkage parameters and how the hyperparameters influence this relationship. 

\noindent In summary, the main contributions of this manuscript are:
\begin{itemize}
    \item We introduced GRASP, a Bayesian estimator for the normal beta prime (NBP) model in the linear regression contexts that handles both within and between grouping structures. Notably, the proposed framework introduces adaptivity in group shrinkage regression such that straightforward extension to multiple and overlapping group structures is allowed, although examples of overlapping groups were not explicitly demonstrated in this paper.

    \item Developed an efficient Metropolis-Hastings algorithm for estimating the hyperparameters of the NBP prior within the Gibbs sampling framework. The method builds on the general strategy of approximating nonstandard Bayesian posterior distributions using exponential family proposals, as introduced by \citet{salimans2013fixed}, and enables accurate and scalable sampling of shape parameters.

    
    \item We show that our method outperforms existing adaptive beta prime estimators in terms of computational speed and predictive performance, across linear regression problems with and without group structures. 
    
    \item Investigation of the log-scale interpretation of local shrinkage parameters under the beta prime prior, providing insights into the interaction between individual and group-level shrinkage.
\end{itemize}
The remainder of this paper is organized as follows: Section~\ref{sec:GeneralizedHorseshoePrior} introduces the NBP prior and its established theoretical properties. We then present the log-scale interpretation of the shrinkage parameters (Sec.~\ref{sec:LogScaleInterpretation}) and compare the NBP prior to other well-known priors, specifically the Laplace and Student-t priors (Sec.~\ref{sec:Relation to other}) in this framework. Section~\ref{sec:Grouped GHS model} details the grouped NBP model and Section~\ref{sec:posterior sampling} outlines the sampling algorithm required for posterior estimation, including the
conditional posterior distributions needed for parameter estimation. Section~\ref{sec:Results} presents an empirical comparison of the proposed and existing NBP estimators for linear regression problems with and without grouped predictors. Finally, Section~\ref{sec:realWorldDataset} evaluates the proposed method on real-world datasets, comparing its performance against existing approaches with discussions on its practical applicability and limitations.

\section{Normal Beta Prime Prior} \label{sec:GeneralizedHorseshoePrior}

The normal beta prime (NBP) prior \cite{bai2021beta} has been referred to by various names in the literature, such as the three-parameter beta (TPB) prior \cite{armagan2011generalized}, the adaptive normal hypergeometric inverted beta prior \cite{yu2021adaptive}, the inverse-gamma gamma prior \cite{bai2017inverse, boss2023group}, and the generalized horseshoe prior \cite{schmidt2019bayesian}. Despite these different names, they all describe the same underlying prior model, which involves placing a beta prime distribution over the local shrinkage parameter, $\lambda$:
\begin{equation} \label{eq: lambda_beta_prime}
   \pi(\lambda_j|a, b) = \frac{2 \lambda_j^{2a-1} (1+\lambda_j^2)^{-a-b}}{{\rm B}(a, b)}, \; a>0, b>0.
\end{equation}
where ${\rm B}(a, b)$ denotes the beta function. The hyperparameter $a$ controls the sparsity level of $\bm \beta$ (i.e. smaller $a$ values yield a marginal prior for $\bm \beta$ that concentrates more around ${\bm \beta} = 0$); while the hyperparameter $b$ can be seen as the tail-decay parameter (i.e. smaller values indicate a slower rate of decay at the tails of the marginal distribution)~\citep{schmidt2019bayesian}. By varying $a$ and $b$, many different prior distributions can be achieved, such as the Strawderman–Berger prior ($a = 0.5, b = 1$), the horseshoe prior ($a = 0.5, b = 0.5$), and normal-exponential-gamma (NEG) prior ($a = 1, b > 0$) \cite{armagan2011generalized}. 

\subsection{Theoretical Properties}
The theoretical studies discussed in this section primarily build upon the normal means problem whereby $ {\rm y}_j|  \beta_j \sim N(\beta_j,1)$. Here ${\by}$ is a $p$-dimensional mean vector and $\beta_j$ is the estimated mean associated with observation ${\rm y}_j$. This is a special case of the linear regression model described in hierarchy \eqref{eq: Bayesian_group_regression_hierarchy}, with $\bf X = I $, $p = n$, and $\sigma^2 = 1$. The normal means model serves as an important test case for the theoretical understanding of many shrinkage methods \cite{carvalho2010horseshoe, bhattacharya2015dirichlet, van2014horseshoe, van2017adaptive, rovckova2018bayesian, armagan2013generalized, bhadra2017horseshoe+, castillo2018empirical} due to its' simple and tractable framework. Within this framework, we examine and compare the shrinkage properties, focusing on sparsity and tail robustness, of the NBP prior against the well-known horseshoe prior.

\subsubsection{Efficiency for sparsity}

When the true parameter $\beta_0$ is zero, the horseshoe prior attains Kullback-Leibler super-efficiency \cite{carvalho2010horseshoe}. This implies that the estimated density using the horseshoe prior converges to the true density faster than the maximum likelihood estimator (MLE) under the assumption that the true mean vector is sparse. The horseshoe prior achieves this super-efficiency due to its asymptote at 0, which efficiently shrinks small coefficients towards zero. In the case of the NBP prior, \citet{bai2019large} demonstrated that the induced marginal prior $\pi(\beta_j | \sigma^2, \tau = 1, a, b)$ exhibits a pole at 0 if and only if $0 < a \leq 1/2$. The strength of this pole increases as $a \to 0$, indicating a stronger concentration of the prior around $\beta_j = 0$. And when $a > 1/2$, the pole disappears, resulting in a non-sparse prior. Consequently, \citet{yu2021adaptive} proved that the Bayes estimator under the NBP prior achieves super-efficiency at the origin with the Kullback-Leibler risk bound being smaller than that of the horseshoe prior when $0 < a < 1/2$.

\subsubsection{Tail robustness}
Another appealing property that the NBP prior shares with the horseshoe prior is tail robustness. \citet{boss2023group} established that for any combination of $a$ and $b$ values, the NBP prior has tails that decay at a polynomial rate, indicating heavy-tailed behavior. This heavy tail property of the prior contributes to its robustness in handling large signals or extreme observations.

Theorem 3.4 in \citet{yu2021adaptive} further supports the asymptotic tail robustness of the NBP estimator. The theorem demonstrates that as $|y| \to \infty$, the influence of the NBP prior on the estimate diminishes, resulting in an asymptotically unbiased estimator for large signals.

\subsection{Log-scale interpretation} \label{sec:LogScaleInterpretation}

A novel and insightful understanding of the properties of the NBP prior can be achieved by expressing the beta prime prior on the local shrinkage parameter in log-scale, denoted as $\xi = \log(\lambda)$. The probability density function for the transformed local-shrinkage variable, $\xi_j$ is then given by:
\begin{equation}\label{eq: xi_mu_nu}
   \pi(\xi_j|a, b) = \frac{2\sigma(2\xi_j)^{a} \sigma(-2\xi_j)^{b}}{{\rm B}(a, b)}.
\end{equation}
where $\sigma(x) = 1/(1+ e^{-x})$ is the standard logistic function. This density is known in the literature as the Type IV generalized logistic distribution. Table \ref{tab:priorDensity} lists the prior densities for $\bm \lambda$ and $\bm \xi$ implied by the different prior choices included in Figure \ref{fig: GHS_effects} and \ref{fig: LassoVSHS}. The concept of exploring the log-scale interpretation of shrinkage priors was first introduced by \citet{schmidt2018log}. It offers an intuitive understanding of the tail behavior and concentration properties of the prior distributions.

\begin{table*}[t]
\caption{Prior densities for $\bm \lambda$ and ${\bm \xi} = \log({\bm \lambda})$ associated with several popular shrinkage rules. Densities are given up to a constant.}
\label{tab:priorDensity}
\begin{center}
\begin{tabular}{lcc}
\toprule
Prior for $\bm \beta$ & Density for $\lambda$ & Density for ${\bm \xi}$\\
\midrule
Lasso & $2\lambda e^{-\lambda^2}$ & $2e^{-e^{2\xi}+2\xi}$ \\
Student-t & $\frac{b^a}{\Gamma(a)} 2 \lambda^{(-2a-1)} e^{\frac{b}{\lambda^2}}$ & $\frac{b^a}{\Gamma(a)} 2 e^{-2 \xi a - \frac{b}{e^{2\xi}}}$ \\
Horseshoe & $\frac{2}{\pi(1+\lambda^2)}$ & $\frac{2e^{\bm \xi}}{\pi (1+e^{2{\bm \xi}})}$ \\
Generalized Horseshoe & $\frac{2 \lambda_j^{2a-1} (1+\lambda_j^2)^{-a-b}}{{\rm B}(a, b)}$ &  $\frac{2\sigma(2\xi_j)^{a} \sigma(-2\xi_j)^{b}}{{\rm Beta}(a, b)}$ \\
\bottomrule
\end{tabular}   
\end{center}
\vspace{-4mm}
\end{table*}

The roles of the hyperparameters $a$ and $b$ for the NBP prior can be better reflected by reparameterizing them in terms of the mean ($\mu$) and sample size ($\nu$). Specifically, we can establish a relationship between $a$ and $b$ with $\mu$ and $\nu$ as $a = \mu \nu$ and $b = (1-\mu) \nu$ respectively. Figure \ref{fig: GHS_effects} depicts the beta prime prior in log scale with the $\mu$ and $\nu$ parameterization. Here, $\mu$ controls the skewness of the distribution: when $\mu < 0.5$, the distribution skews left (more probability placed on smaller $\lambda$), favoring sparsity and concentrating around ${\bm \beta} = 0$; Conversely, $\mu > 0.5$ results in a right-skewed distribution, allowing less sparsity and bias towards larger $\lambda$, thus avoiding over-shrinkage/underestimation of large effects; when $\mu = 0.5$, the distribution is symmetric, suggesting no preference over sparsity or non-sparsity. On the other hand, $\nu$ is the concentration parameter that controls the level of probability concentrated around $\mu$, with smaller values allowing for more variation. As $\nu \to \infty$, the prior converges to the ridge prior, with minimal variation in $\bm \lambda$ and a point mass at $\mu$. Consequently, by appropriately choosing $a$ and $b$, it becomes possible to approximate the behavior of many well-known priors across a wide range of sparsity levels, from those inducing sparsity to those promoting non-sparsity.

\begin{figure}[t]
     \centering
     \begin{subfigure}[b]{0.3\textwidth}
         \centering
         \includegraphics[width=\textwidth]{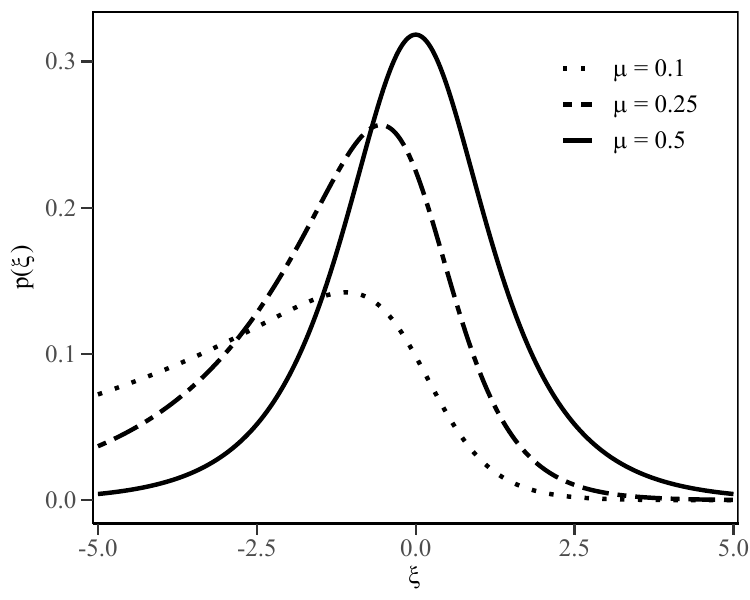}
         \caption{$\nu = 1$, $\mu \le 0.5$}
     \end{subfigure}
     \hspace{1em}%
     \begin{subfigure}[b]{0.3\textwidth}
         \centering
         \includegraphics[width=\textwidth]{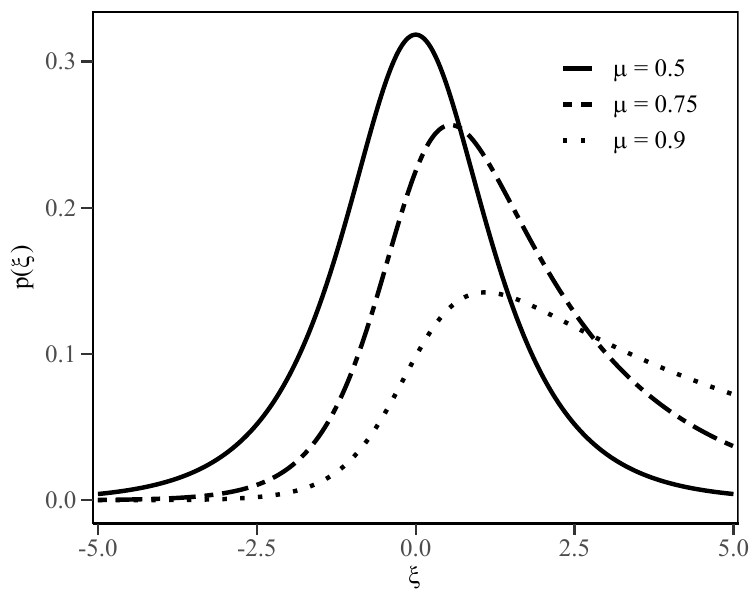}
         \caption{$\nu = 1$, $\mu \ge 0.5$}
     \end{subfigure}
     \hspace{1em}%
     \begin{subfigure}[b]{0.3\textwidth}
         \centering
         \includegraphics[width=\textwidth]{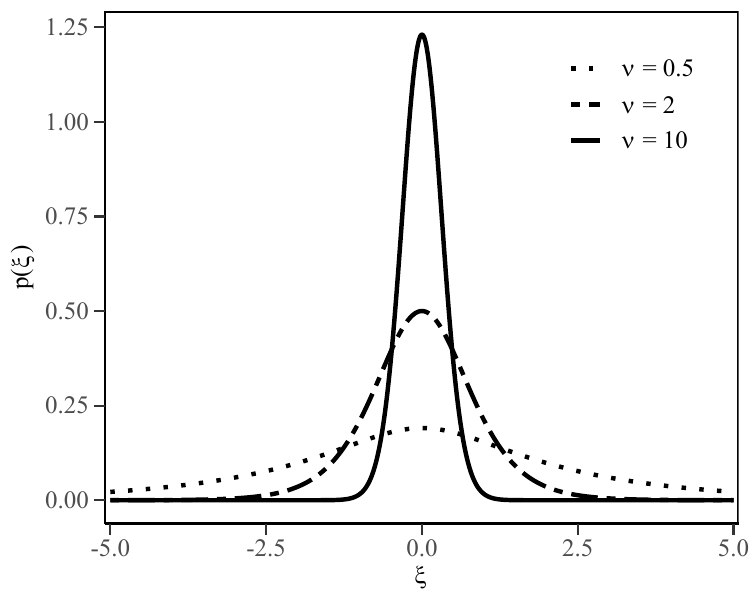}
         \caption{$\mu = 0.5$}
     \end{subfigure}
     \caption{Beta prime prior density in ${\bm \xi} = \log {\bm \lambda}$ space with varying $\mu$ and $\nu$.}
     \label{fig: GHS_effects}
\end{figure}

\subsection{Relation to other shrinkage priors} \label{sec:Relation to other}

Here we compare the characteristics and tail behavior of the NBP prior to popular shrinkage priors namely the Laplace prior and the Student-t prior using the log scale parameterization. The Laplace prior, which corresponds to the Bayesian Lasso model, is obtained by placing an exponential prior on $\lambda^2$ with a rate parameter of 1, i.e., $\lambda_j^2 \sim \rm{Exp}(1)$; While, the Student-t priors with a degree of freedom $\gamma$ can be achieved using inverse-gamma mixing, where $\lambda^2_j \sim {\rm IG}\left(\frac{\gamma}{2}, \frac{\gamma}{2}\right)$.

The log-scale parameterization provides a clearer visualization of why the Bayesian Lasso introduces bias in estimation when the underlying model coefficients are large, in contrast to the Horseshoe prior. In Figure \ref{fig: LassoVSHS(a)}, the marginal prior over $\beta$ for both the Horseshoe and Laplace prior appear symmetric, making the explanation less apparent. However, in Figure \ref{fig: LassoVSHS(b)}, we clearly observe that the distribution over $\xi$ for the Bayesian Lasso is asymmetric, with a heavier left-hand tail and a stronger preference for $\xi_j < 0$. This asymmetry reveals that the Bayesian Lasso is biased towards small shrinkage values ($\lambda$ close to zero), indicating a bias towards zero coefficients. 

Similarly, the Student-t prior also exhibits asymmetry in the distribution over $\xi$, but unlike the Laplace prior, the Student-t prior exhibits a heavier right-hand tail. This preference for large coefficients and having a lower probability for small coefficients is evident from the heavy tails and the absence of a peak at the origin in the distribution plot over $\beta$. In contrast, the horseshoe prior places equal preference on both large and small coefficients, as its distribution over $\xi$ is symmetric.

It is important to note that the Bayesian Lasso is not a special case of the NBP prior, and approximation between the two is challenging due to their distinct tail behaviors. The Lasso (exponential) prior vanishes at a log-super-linear rate in the right tail (i.e. exponentially decaying tails in the original parameterization) :
\begin{equation}
\lim_{\xi\to -\infty} \frac{- \log p_{\rm L}(\xi)}{\xi} = -2 \;\;\;\;\;\;{\rm and}\;\;\;\;\;\;\lim_{\xi\to \infty} \frac{- \log p_{\rm L}(\xi)}{\xi} = \infty
\end{equation}
The beta prime prior, on the other hand, has log-linear tails:
\begin{equation}
\lim_{\xi\to -\infty} \frac{- \log p_{\rm bp}(\xi|a,b)}{\xi} = -2a \;\;\;\;\;\;{\rm and}\;\;\;\;\;\;\lim_{\xi\to \infty} \frac{- \log p_{\rm bp}(\xi|a,b)}{\xi} = 2b.
\end{equation}
This suggests that the beta prime prior can only generalize to priors with heavy tails (log-linear decaying tails).

Let $x^2|\nu, b \sim {\rm IG}(b, 1/\nu)$ and $\nu | a \sim {\rm IG}(a,1)$, then $p(x) \propto x^{2a-1} (1+x^2)^{-a-b}$ \cite{schmidt2019bayesian} whereby marginally, $x$ follows (\ref{eq: lambda_beta_prime}). Using this decomposition, we can express the NBP prior as a scale mixture representation:
\begin{align}\label{eq: GHS_IGIG}
\begin{split}
  \beta_{j}|\tau^2, \lambda_{j}^2, \sigma^2 \; &\sim \; N\left(0, \; \tau^2 \lambda^2_{j} \sigma^2\right) \\
  \lambda^2_{j}|\nu_{j} \; &\sim \; {\rm IG}(b, 1/\nu_{j})\\
  \nu_{g1}, \cdots \nu_{p} \; &\sim \; {\rm IG} (a, 1).
\end{split}
\end{align}
This hierarchy can be viewed as a Student-t prior with an additional IG hyperprior. Upon integrating out the local shrinkage parameter $\lambda$, it results in a Student-t prior distribution over the regression coefficient $\beta_j$ of the form
\begin{align}\label{eq: Student-t}
\begin{split}
  p(\beta_j| \tau^2, \sigma^2, \nu, b) &= \int p(\beta_j| \tau^2, \sigma^2, \lambda^2)p(\lambda^2|\nu, b) \,d\lambda^2\\
  &=  \frac{\sqrt{\nu_j} \, \Gamma(b + \frac{1}{2})}{\sqrt{2 \pi \sigma^2 \tau^2} \, \Gamma(b)} \left( 1 + \frac{\beta_j^2 \nu_j}{2 \sigma^2 \tau^2} \right)^{-\left(b + \frac{1}{2}\right)}
\end{split}
\end{align}
Consequently, the NBP model can be seen as the product of independent Student-t distributions over $\beta_j$.

\begin{figure}[t]
     \centering
     \begin{subfigure}[b]{0.45\textwidth}
         \centering
         \includegraphics[width=\textwidth]{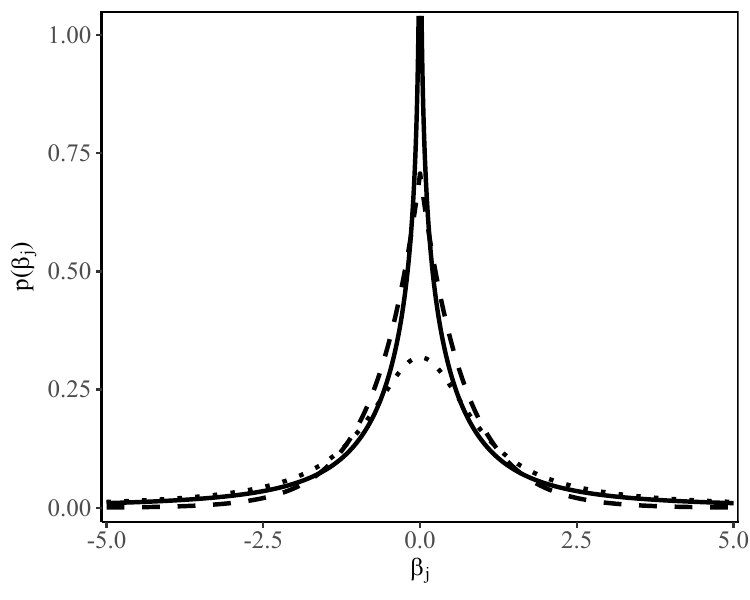}
         \caption{}\label{fig: LassoVSHS(a)}
     \end{subfigure}
     \hspace{1em}%
     \begin{subfigure}[b]{0.45\textwidth}
         \centering
         \includegraphics[width=\textwidth]{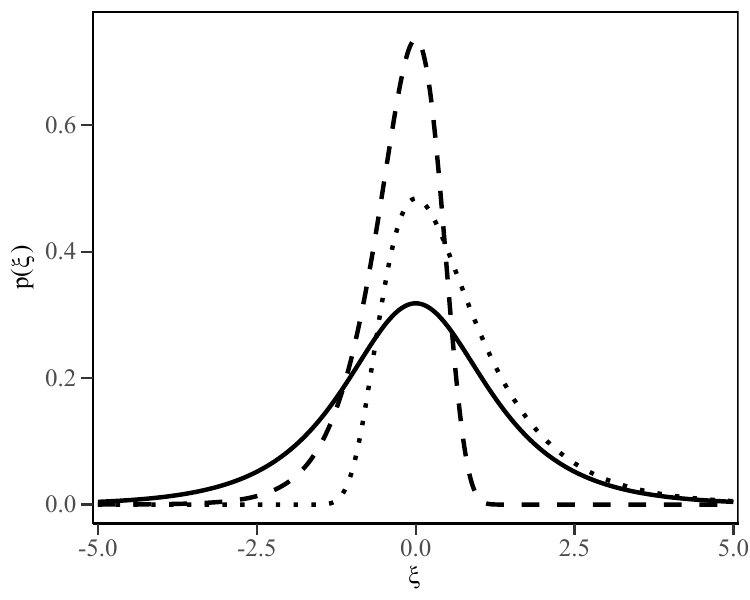}
         \caption{}\label{fig: LassoVSHS(b)}
     \end{subfigure}
     \caption{Comparison of the Lasso (dashed), Student-${\rm t_{\gamma = 1}}$  (dotted), and horseshoe (solid) densities on the marginal prior over $\bm \beta$ (left) and transformed prior for the local shrinkage parameter with ${\bm \xi} = \log {\bm \lambda}$ (right).} \label{fig: LassoVSHS}
\end{figure}

\section{GRASP model} \label{sec:Grouped GHS model}
Using the hierarchical Bayesian grouped regression model \eqref{eq: Bayesian_group_regression_hierarchy} - \eqref{eq: group_shrinkage_hier}, we assign the beta prime prior to both the group and local shrinkage parameters. We assume no prior knowledge on the degree of shrinkage of the regression coefficients and assign the recommended half-cauchy prior for the global variance parameter~\cite{gelman2006prior,polson2012half}:
\begin{align}\label{hier: betaP_prior}
\begin{split}
  \lambda^2_{gj} \; &\sim \; B'(a_g,b_g) \\
  \delta^2_g \; &\sim \; B'(a, b) \\
  \tau \; &\sim \; C^+(0,1).
\end{split}
\end{align}
By selecting appropriate values for $a_g$ and $b_g$, we can control the level of adaptivity within each individual group, while $a$ and $b$ govern the adaptivity of the overall group shrinkage parameter. The introduction of the group shrinkage parameter $\delta$ introduces apriori correlation among the local shrinkage parameters within the same group. Proposition \ref{propo:corr_delta_lambda} provides the exact formula for computing this correlation. It reveals that a higher variance in the sequence $\lambda_g$ corresponds to a lower correlation, indicating that $\delta$ has a lesser impact on the local shrinkage parameter within group $g$. Conversely, a higher variance in $\delta$ results in a higher correlation, indicating that $\delta$ exerts a greater influence on the values of $\lambda$ within the same group. With reference to Figure \ref{fig:group_local_illustration}, this implies that if, for instance, group 1 contains more zero coefficients compared to group 2, we can enhance the shrinkage of the $\lambda$ parameters in group 1 towards zero by adjusting the values of $\delta_1$ without affecting the estimates in group 2.

\begin{figure}[t]
\setlength{\unitlength}{0.14in} 
\centering 
\begin{picture}(32,16) 
\put(-3,14){\footnotesize Regression coefficients}
\put(8,13){\framebox(2.5,2.5){$\beta_1$}}
\put(10.5,13){\framebox(2.5,2.5){$\beta_2$}}
\put(13,13){\framebox(2.5,2.5){$\beta_3$}}
\put(15.5,13){\framebox(2.5,2.5){$\beta_4$}}
\put(18,13){\framebox(2.5,2.5){$\beta_5$}}
\put(20.5,13){\framebox(2.5,2.5){$\beta_6$}}
\put(23,13){\framebox(4.5,2.5){$\cdots$}}
\put(27.5,13){\framebox(2.5,2.5){$\beta_{p-2}$}}
\put(30,13){\framebox(2.5,2.5){$\beta_{p-1}$}}
\put(32.5,13){\framebox(2.5,2.5){$\beta_p$}}
\put(-3,11){\footnotesize Local shrinkage parameters}
\put(8,10){\framebox(2.5,2.5){$\lambda_1$}}
\put(10.5,10){\framebox(2.5,2.5){$\lambda_2$}}
\put(13,10){\framebox(2.5,2.5){$\lambda_3$}}
\put(15.5,10){\framebox(2.5,2.5){$\lambda_4$}}
\put(18,10){\framebox(2.5,2.5){$\lambda_5$}}
\put(20.5,10){\framebox(2.5,2.5){$\lambda_6$}}
\put(23,10){\framebox(4.5,2.5){$\cdots$}}
\put(27.5,10){\framebox(2.5,2.5){$\lambda_{p-2}$}}
\put(30,10){\framebox(2.5,2.5){$\lambda_{p-1}$}}
\put(32.5,10){\framebox(2.5,2.5){$\lambda_p$}}
\put(8,9.8){$\underbrace{\hspace*{9.9em}}$}
\put(11.8,8.2){$a_1, b_1$}
\put(18.2,9.8){$\underbrace{\hspace*{4.8em}}$}
\put(19.4,8.2){$a_2, b_2$}
\put(28,9.8){$\underbrace{\hspace*{6.8em}}$}
\put(30,8.2){$a_G, b_G$}
\put(-3,6){\footnotesize Group shrinkage parameters}
\put(8,5){\framebox(10,2.5){$\delta_1$}}
\put(18,5){\framebox(5,2.5){$\delta_2$}}
\put(23,5){\framebox(7,2.5){$\cdots$}}
\put(30,5){\framebox(5,2.5){$\delta_G$}}
\put(8,4.8){$\underbrace{\hspace*{27em}}$}
\put(20.6,3.2){$a, b$}
\put(-3,1.2){\footnotesize Global shrinkage parameter}
\put(8,0.2){\framebox(27,2.5){$\tau$}}
\end{picture}
\captionsetup{font = footnotesize}
\caption{An illustration of how the shrinkage parameters and hyperparameters interact. Each regression coefficient is associated with a local shrinkage parameter $\lambda$ that controls individual shrinkage, while the global shrinkage parameter $\tau$ controls the overall shrinkage. Within each group of local shrinkage parameters, an additional group shrinkage parameter $\delta$ further amplifies or diminishes the shrinkage of $\lambda$ values within that specific group. Each group of $\lambda$ is also characterized by its unique hyperparameters, denoted as $a_g$ and $b_g$, which control the adaptivity and level of sparsity of the $\lambda$s. An additional pair of hyperparameters $a$ and $b$ is introduced to control the adaptivity of $\delta$.} 
\label{fig:group_local_illustration} 
\end{figure}
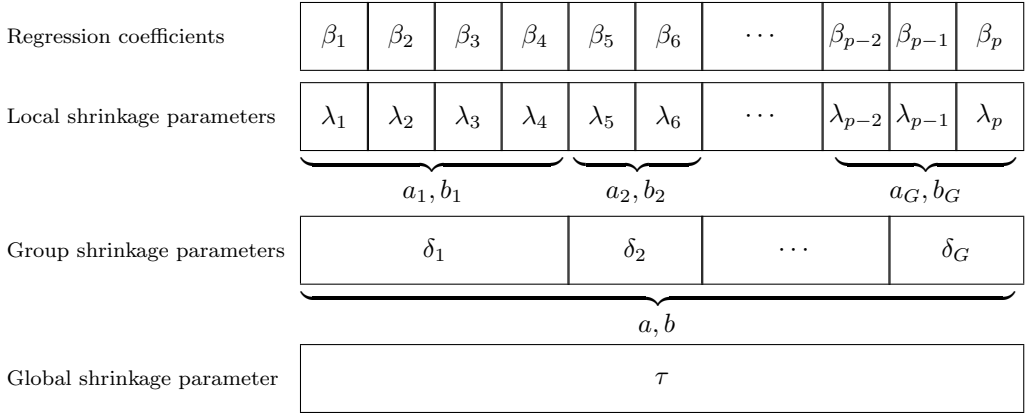

\begin{proposition} \label{propo:corr_delta_lambda}
The correlation between the log-transformed product of the group shrinkage parameter $\delta_g$ and the local shrinkage parameters $\lambda_g$ within the same group, $g$ can be computed as the ratio of the variance of $\delta$ to the sum of the variance of $\delta$ and $\lambda_g$: 
\begin{equation}
    {\rm corr}(\log(\delta_g\lambda_{gi}), \log(\delta_g\lambda_{gj})) = \frac{\V{}{\log(\delta)}}{ \V{}{\log(\delta)} + \V{}{\log(\lambda_g)} }
\end{equation}
where $(\lambda_{gi}, \lambda_{gj}) \in {\bm \lambda_g}$.
\end{proposition}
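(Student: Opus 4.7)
The plan is to exploit the additive decomposition $\log(\delta_g\lambda_{gi}) = \log(\delta_g) + \log(\lambda_{gi})$ together with the mutual independence of the shrinkage parameters built into the hierarchy \eqref{hier: betaP_prior}. Writing $X = \log\delta_g$, $Y_i = \log\lambda_{gi}$, and $Y_j = \log\lambda_{gj}$, I would first observe that $\lambda_{gi}$ and $\lambda_{gj}$ are independent draws from the same $B'(a_g,b_g)$ distribution, so $\V{}{Y_i} = \V{}{Y_j} =: \V{}{\log(\lambda_g)}$, and all three of $X, Y_i, Y_j$ are a priori mutually independent.

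The main calculation then splits into two one-line pieces. By bilinearity of covariance and independence,
\begin{equation*}
  \mathrm{Cov}(X + Y_i,\; X + Y_j) \;=\; \V{}{X} + \mathrm{Cov}(X,Y_j) + \mathrm{Cov}(Y_i,X) + \mathrm{Cov}(Y_i,Y_j) \;=\; \V{}{X},
\end{equation*}
the last three terms vanishing, while $\V{}{X + Y_i} = \V{}{X + Y_j} = \V{}{X} + \V{}{\log(\lambda_g)}$. Substituting into the definition of correlation and simplifying yields precisely the stated ratio $\V{}{\log(\delta)}/(\V{}{\log(\delta)} + \V{}{\log(\lambda_g)})$.

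The argument is essentially elementary once independence and the common within-group distribution of the $\lambda_{gj}$'s are invoked, so there is no serious obstacle. The only technical point worth checking is that $\V{}{\log\delta_g}$ and $\V{}{\log\lambda_{gj}}$ are in fact finite, which follows because the log-transformed beta prime density \eqref{eq: xi_mu_nu} is a Type IV generalized logistic with exponentially decaying tails in $\xi$ for any $a,b > 0$, so all moments of the log-shrinkage parameters exist and the ratio is well-defined. Conceptually, what the proposition expresses is that the shared factor $\delta_g$ is the unique source of a priori within-group correlation, and the log transform turns that multiplicative coupling into an additive common component whose share of the total log-variance is exactly the induced within-group correlation.
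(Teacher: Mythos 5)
Your proof is correct and follows essentially the same route as the paper's: decompose $\log(\delta_g\lambda_{gi}) = \log\delta_g + \log\lambda_{gi}$, use independence to reduce the covariance to $\V{}{\log\delta_g}$, and invoke the shared within-group hyperparameters to equate $\V{}{\log\lambda_{gi}}$ and $\V{}{\log\lambda_{gj}}$. Your additional remark that the log-beta-prime variances are finite is a small but welcome extra beyond what the paper states explicitly.
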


\begin{proof}
See Appendix \ref{apx: prop_proof}
\end{proof}

\noindent The variance of the local ($\lambda$) and group ($\delta$) shrinkage parameters on logarithmic scales is influenced by the tail behavior of their prior distributions in the original space. Lighter tails in the original space result in lower variance in the log scale and conversely, a heavier tail leads to higher variance. Specifically, the hyperparameters $a$ and $b$ in the beta prime priors~\eqref{hier: betaP_prior} control the tail behavior of both, $\lambda$ and $\delta$, thereby controlling the extent of the grouping effect. Proposition~\ref{propo:corr_delta_lambda} suggests that when the variance of $\log \delta$ is large relative to the variance of $\log \lambda$, the grouping effect becomes the dominant factor in the hyperparameter estimation (i.e., the correlation of the shrinkage parameters within the group is highest and the effect of $\delta$ becomes more significant). Conversely, if the variance of $\log \delta$ is small relative to the variance of $\log \lambda$ then the individual local shrinkage parameters will exhibit greater a priori variability. Therefore, by adjusting the beta prime tail behavior, we can control the {\em a priori} correlation among the shrinkage parameters. A significantly larger variance in $\log \delta$ (relative to the variance in $\log \lambda$) suggests strong grouping effect, whereas a smaller variance in $\log \delta$ (relative to the variance in $\log \lambda$) implies weaker impact.

We compare the exact prior correlation values of different group shrinkage priors in Table \ref{tab:corr_compare}. For the non-adaptive grouped horseshoe method, the apriori correlation between the effective local shrinkage parameters within a group is fixed at 0.5. This fixed correlation can be recovered by setting $a$ and $b$ to 0.5 in both the GIGG and GRASP settings. The introduction of hyperparameters $a$ and $b$ offers the flexibility to control not only the sparsity levels between and within groups but also the prior correlation by appropriately selecting priors for these parameters. Through Proposition~\ref{propo:corr_delta_lambda}, we quantify how the choice of priors for the shape parameters of the beta prime distribution ($a, b$) influences the correlation among local shrinkage parameters within a group. This novel capability enables a systematic exploration of the prior correlation structure, addressing a previously unexamined aspect of group shrinkage models. Figure~\ref{fig:correlation_behavior} illustrates how different prior choices result in distinct distributions of the apriori correlation. We explore four distinct behaviors, detailed below:
\begin{enumerate}
    \item \textbf{Proposed prior - $a, b, a_g, b_g \; \sim \; C^+(0,1)$ (Figure \ref{fig: corr_all_halfCauchy}).} \\
    In section \ref{sec: sample ab}, we proposed placing a half Cauchy prior on the beta prime shape parameters and sample from the conditional posterior to estimate the parameters and learn the level of sparsity and signal from the data. Figure \ref{fig: corr_all_halfCauchy} shows that under this prior configuration, both GIGG and GRASP yield similar correlation distributions, with GRASP showing less concentration at the extremes (correlation of 1 and 0) and more probability mass in the middle range compared to GIGG. This behavior aligns with the intuition that models with intermediate correlation values are often indistinguishable in practice, while models with either no group shrinkage (only individual-level effects) or strong group shrinkage (minimal individual variation) are the most distinguishable. This prior structure effectively encodes the likelihood of two scenarios: (1) no group effect (only local individual effects) and (2) only group effects with no local effects, making it an intuitive and interpretable prior choice.
    
    \item \textbf{Non-adaptive grouped horseshoe - $a^q, b^q, a^q_g, b^q_g \; \sim \; C^+(0,1)$ with $q = 10^4$ (Figure \ref{fig: corr_all_halfCauchyQ})}. \\
    By choosing the hyperparameters' priors to be the $q$-th root of a half-Cauchy distribution with $q$ set to a sufficiently large value (e.g., $10^4$), we recover a point mass at 0.5 for the correlation. This result replicates the behavior of the non-adaptive grouped horseshoe.
    \item \textbf{Uniform aprior correlation - $a, b, a_g, b_g \; \sim \; C^+(0,25)$ (Figure \ref{fig: corr_uniform})}. \\
    This prior configuration seems to be expressing no explicit preference for any specific correlation value, given its approximately uniform distribution across the correlation range. Notably, this behavior is achievable only with GRASP; GIGG, under the same prior settings, exhibits residual biases toward correlations near 0 and 1 due to its structural limitations. While a uniform prior correlation may seem like a reasonable default for expressing no apriori preference, it is generally discouraged. A uniform prior does not necessarily encode prior ignorance about the correlation coefficient, as the correlation parameter operates on a nonlinear scale. Although the parameter itself might be defined on a linear scale, its effects are not linear in practice. Instead, more principled alternatives, such as the proposed prior or Jeffrey’s prior, which accounts for Fisher information, are generally preferred as they better capture the behavior of the parameter's effective distribution.

    \item \textbf{Unique trimodal behavior - $a, b \sim C^+(0,1); a^q_g, b^q_g \; \sim \; C^+(0,1)$ with $q = 10^4$ (Figure \ref{fig: corr_half_halfCauchy_halfQ})}. \\
    This configuration highlights the flexibility of the GRASP hierarchy compared to GIGG. Figure \ref{fig: corr_half_halfCauchy_halfQ} reveals a trimodal correlation density for GRASP, with modes at:(i) zero - group shrinkage dominates, with negligible/non-existence individual shrinkage; (ii) one - individual shrinkage dominates, with negligible/ non-existence group shrinkage; (iii) 0.5 - Equal contribution from both shrinkage levels; with non-zero probabilities across other correlation values. This trimodal behavior cannot be achieved using GIGG due to its limited degrees of freedom. The GIGG hierarchy requires $a = b$ for symmetric correlation densities, restricting independent control over concentration and tail behavior. Consequently, in the GIGG structure, any effort to control the probability weight at the origin or tails inherently results in an implied fixed correlation. In contrast, GRASP’s additional degrees of freedom (extra $a$ and $b$ hyperparameter) allow simultaneous control over fixed correlation levels and the weighting of tail probabilities, enabling far greater flexibility in modeling correlation behaviors.
\end{enumerate}

\begin{table*}[t]
\caption{Comparison of the correlation between the effective shrinkage parameters within a group across different hierarchy configurations. $\phi'(\cdot)$ denotes the trigamma function. Detailed derivations are provided in Appendix \ref{apx: prop_GIGG} and \ref{apx: prop_GNBP}.}\label{tab:corr_compare}
\vspace{-2mm}
\begin{center}
\begin{tabular}{@{}c@{\hskip 0.1in}c@{\hskip 0.1in}c@{\hskip 0.1in}c@{}}
\toprule
Group Horseshoe & GIGG & Group Horseshoe+ & GRASP \\
\midrule
0.5 & $\displaystyle \frac{\phi'(a_g)}{\phi'(a_g) + \phi'(b_g)}$ & 0.5 & $\displaystyle \frac{\phi'(a) + \phi'(b)}{\phi'(a) + \phi'(b) + \phi'(a_g) + \phi'(b_g)}$\\
\bottomrule
\end{tabular}   
\end{center}
\vspace{-3mm}
\end{table*}
\begin{figure}[t]
     \centering
     \begin{subfigure}[b]{0.48\textwidth}
         \centering
         \includegraphics[width=\textwidth]{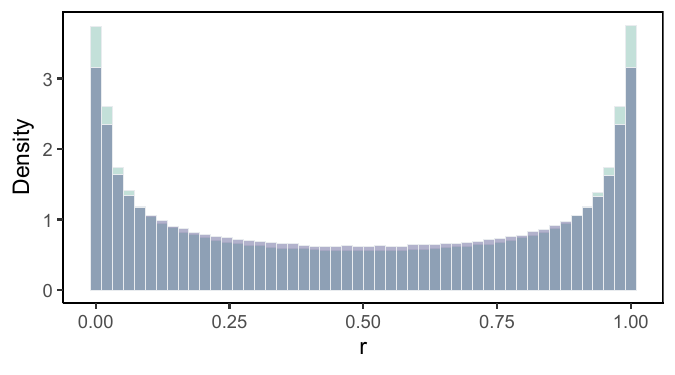}
         \caption{$a, b, a_g, b_g \; \sim \; C^+(0,1)$} \label{fig: corr_all_halfCauchy}
     \end{subfigure}
     \hspace{1em}%
     \begin{subfigure}[b]{0.48\textwidth}
         \centering
         \includegraphics[width=\textwidth]{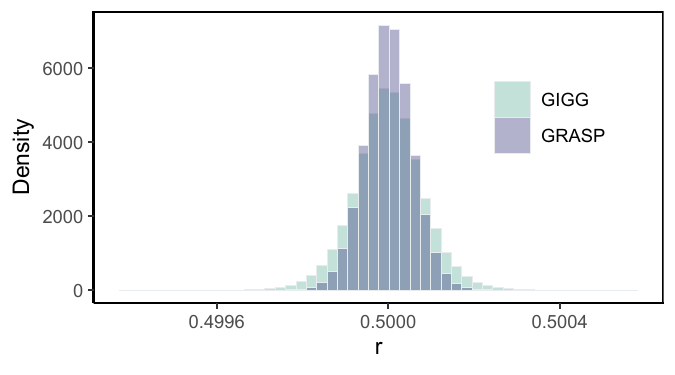}
         \caption{$a^q, b^q, a^q_g, b^q_g \; \sim \; C^+(0,1)$} \label{fig: corr_all_halfCauchyQ}
     \end{subfigure}
     \begin{subfigure}[b]{0.48\textwidth}
         \centering
         \includegraphics[width=\textwidth]{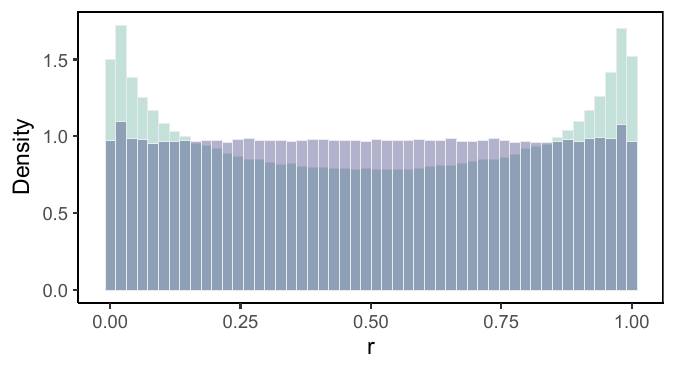}
         \caption{$a, b, a_g, b_g \; \sim \; C^+(0,25)$} \label{fig: corr_uniform}
     \end{subfigure}
     \hspace{1em}%
     \begin{subfigure}[b]{0.48\textwidth}
         \centering
         \includegraphics[width=\textwidth]{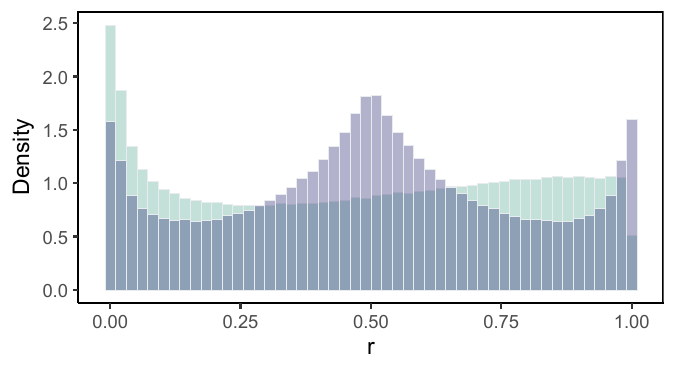}
         \caption{$a^q, a^q_g \; \sim \; C^+(0,1); b, b_g \; \sim \; C^+(0,1)$}\label{fig: corr_half_halfCauchy_halfQ}
     \end{subfigure}
     \caption{Comparison of apriori correlation distribution between GIGG and GRASP under different shape hyperparameter prior configurations. $r = {\rm corr}(\log(\delta_g\lambda_{gi}), \log(\delta_g\lambda_{gj}))$ and $q = 10^4$.}  \label{fig:correlation_behavior}
\end{figure}
\section{Posterior Sampling} \label{sec:posterior sampling}
By adopting the scale mixture representation of the beta prime distribution in (\ref{eq: GHS_IGIG}), we present the revised grouped NBP hierarchy for GRASP as follows
\begin{align}\label{Revised_grouped_regression_hierarchy}
\begin{split}
  \bm{y}|\bm{X}, \bm{\beta}, \sigma^2 \; &\sim \; N_n\left(\bm{X\beta}, \; \sigma^2\bm{I}_n\right) \\
  \sigma^2 \; &\sim \; \sigma^{-2} d\sigma^2  \\
  \beta_{gj}|\tau^2, \delta^2_g, \lambda_{gj}^2, \sigma^2 \; &\sim \; N\left(0, \; \tau^2 \lambda^2_{gj} \delta^2_g \sigma^2\right) \\
  \lambda^2_{gj}|\nu_{gj} \; \sim \; {\rm IG}(b_g, 1/\nu_{gj}) \;\;\;\; &, \;\;\;\; \nu_{g1}, \cdots \nu_{gp} \; \sim \; {\rm IG} (a_g, 1)\\
  \delta^2_g|\zeta_g \; \sim \; {\rm IG}(b, 1/\zeta_g) \;\;\;\; &, \;\;\;\; \zeta_{1}, \cdots \zeta_{G} \; \sim \; {\rm IG} (a, 1)\\
  \tau \; &\sim \; C^+(0,1).
\end{split}
\end{align}
Using this hierarchy, we develop a Gibbs sampler that iteratively samples from the following full conditional densities:
\begin{align}\label{hier: gibbs_sampler}
\begin{split}
  {\bm \beta} | \cdot \; \sim \; N_p &({\bf A}^{-1}{\bf X}^{\rm T}{\bf y}, \sigma^2{\bf A}^{-1}) \\
  \sigma^2 | \cdot \; \sim \; {\rm IG} &\left(\frac{n+p}{2}, \frac{||{\bf y} - {\bf X}\bm{\beta}||^2 + {\bm \beta}^{\rm T} (\tau^2 {\bm \Lambda})^{-1} {\bm \beta}}{2} \right) \\
  \lambda_{gj}^2|\cdot \; \sim \; {\rm IG} \left(\frac{1}{2} + b_g, \;\; \frac{1}{\nu_{gj}} + \frac{\beta_{gj}^2}{2\tau^2\sigma^2} \right) \;\;\;\; &, \;\;\;\;  \nu_{gj} | \cdot \; \sim \; {\rm IG} \left(a_g + b_g, \;\; 1 + \frac{1}{\lambda_{gj}^2} \right)\\
  \delta_{g}^2|\cdot \; \sim \; {\rm IG} \left(\frac{ng}{2} + b, \;\; \frac{1}{\zeta_g} + \sum_{g = 1}^{ng}\frac{\beta_{g}^2}{2\tau^2\sigma^2\lambda^2_g} \right) \;\;\;\; &, \;\;\;\;  \zeta_g | \cdot \; \sim \; {\rm IG} \left(a + b, \;\; 1 + \frac{1}{\delta_g^2} \right)\\
\end{split}
\end{align}
where $ng$ is the number of predictors in group $g$, ${\bf A} = {\bf X}^{\rm T} {\bf X} +  (\tau^2 {\bm \Lambda})^{-1}$ and ${\bm \Lambda} = {\rm diag}(\lambda^2_1, \cdots, \lambda^2_p)$.  In the absence of a group structure, this Gibbs sampler can be readily transformed into a normal linear regression with a beta prime prior by setting $g=1$ and not sampling the $\delta$ and $\zeta$ parameters.

\subsection{Estimating the adaptive parameters $a$ and $b$} \label{sec: sample ab}
To allow the NBP prior adapt to varying sparsity and signal sizes, we estimate the shape hyperparameters $a$ and $b$ from the data by placing hyper-priors on them and sampling from their conditional posterior distributions. We present an efficient algorithm to approximate the conditional posterior of the beta shape parameters $a$ and $b$ using gamma distributions. Given observations $x_1, \cdots, x_n | a, b \sim {\rm Beta}(a, b)$ and a prior for the shape parameters $a$ and $b$, the goal is to approximate the conditional distributions $p(a | x_1, \cdots, x_n, b)$ and $p(b | x_1, \cdots, x_n, a)$ using gamma distributions:
\begin{equation}
    p(a | x_1, \cdots, x_n, b) \approx {\rm Gamma(k_a, \theta_a)}, \;\;\;\;\;\;  p(b | x_1, \cdots, x_n, a) \approx {\rm Gamma(k_b, \theta_b)}\\
\end{equation}
A half-Cauchy prior is assigned to the shape parameters:
\begin{equation}\label{hier: ab_hyperprior}
    a_1\cdots a_G, b_1 \cdots b_G, a, b \; \sim \; C^+(0,1). \\
\end{equation}
The prior choice is motivated by the three distinct mode behaviors of the beta distribution: 
\begin{enumerate}
    \item When $a=b=1$, the beta distribution becomes uniform, and any value in the domain is the mode.
    \item For $a,b > 1$, the beta distribution has a unique mode at $(a-1)/(a+b-2)$.
    \item When $a <1$ or $b<1$, the distribution diverges, exhibiting a pole at the origin ($a<1$) or at 1 ($b<1$), with no well-defined mode.
\end{enumerate}
The half-Cauchy prior assigns equal probabilities to $a < 1$ and $a>1$ (similarly for $b$), reflecting a prior belief in having an equal likelihood of observing super-efficiency (a pole at the origin) or not. This symmetry assumes no apriori preference for the degree of sparsity, which is particularly useful for robust sampling. Furthermore, the half-Cauchy prior is widely recognized as a robust default for scale parameters due to its desirable theoretical properties (such as minimaxity and many others as documented in \citet{maruyama2024minimaxity,polson2012half}), making it a practical and theoretically sound choice for this application.

Our proposed algorithm, inspired by \citet{salimans2013fixed}, leverages Metropolis-Hastings sampling, where the proposal distribution (a gamma distribution) is expressed in exponential family form with natural parameters. We then solve for the gamma distribution parameters that best approximate the target distribution. If the proposal distribution exactly matches the target distribution, this method provides an exact fit. Otherwise, it yields an approximation that minimizes the Kullback-Leibler divergence between the proposal and target distributions \cite{salimans2013fixed}. This Metropolis-Hastings algorithm seamlessly integrates into the Gibbs sampling framework \eqref{hier: gibbs_sampler}. Further details of the proposed algorithm are provided in Appendix \ref{apx: Fast and Accurate Approximation of the Beta Shape Parameters}.

\subsection{Estimating the global shrinkage parameter $\tau$}

We explore two approaches for sampling the global shrinkage parameter $\tau$: 1) assigning $\tau$ a standard half-Cauchy prior $C^+(0,1)$; 2) sampling from a truncated half-Cauchy prior that restricts $\tau$ to the interval $[1/n, 1]$, as suggested by \cite{van2017adaptive}. The first approach involves adopting the inverse gamma scale mixture representation of the half-Cauchy prior \cite{makalic2015simple} such that:  
\begin{align}
\begin{split}
  \tau^2 \, | \, \omega \; &\sim \; {\rm IG (1/2, 1/\omega)},\\
  \omega \; &\sim \; {\rm IG (1/2, 1)},
\end{split}
\end{align}
and sampling $\tau$ simply requires sampling from the following conditional posterior distribution:
\begin{align}
\begin{split}
  \tau^2 \, | \, \cdot \; &\sim \; {{\rm IG} \left(\frac{p+1}{2}, \frac{1}{\omega} + \frac{1}{2\sigma^2}\sum_{j = 1}^p \frac{\beta_j^2}{\lambda_j^2} \right)},\\
  \omega \,|\, \cdot \; &\sim \; {{\rm IG} \left(1, 1 + \frac{1}{\tau^2} \right)}.
\end{split}
\end{align}
The posterior mean estimates of $\tau$ are computed as the average of the $\tau$ samples drawn from this conditional posterior distribution.

In the second approach, we adopted a modified rejection sampling scheme, following the framework introduced by \citet{schmidt2019bayesian}. We conducted numerical experiments under settings similar to those described in Section~\ref{sec:Results}, and observed that truncating the global shrinkage parameter $\tau \in [1/n, 1]$, as recommended by \citet{van2017adaptive}, resulted in worst predictive performance in the context of grouped linear regression models. Consequently, in the results section, we only present results without truncation of $\tau$. While truncating $\tau$ might offer theoretical guarantees in the normal means model, its use does not appear to be justified for grouped regression models. Previous work has shown that any $\tau \in [1/n, 1]$ can lead to suboptimal contraction rates for sparse high-dimensional regression, when $p \gg n$ \cite{bai2021beta, song2023nearly}.


\section{Simulated Results}\label{sec:Results}

We performed a comparative evaluation of GRASP against existing NBP estimators for linear regression and the GIGG model for grouped linear regression. All experiments were conducted using the R statistical platform. The datasets and code used for the experimental results in this section are publicly available\footnote{Available at \url{https://github.com/shuyu-tew/GRASP.git}}. The implementation of the GRASP will be included in \textbf{bayesreg}\footnote{Available at \url{https://cran.r-project.org/web/packages/bayesreg/index.html}} the Bayesian regression package. For the existing state-of-the-art NBP estimator, we used the \texttt{nbp} function from the \texttt{NormalBetaPrime} R package, and for the GIGG model, we used the \texttt{gigg} R package. Unless specified otherwise, all function arguments were set to their default values. Without any loss of generality, all predictors and the target variable are standardized to have means zero and unit variance.

In terms of prior configuration, GIGG deviates slightly from our approach by adopting a different setting where the product of the group shrinkage parameter $\delta$ and the local shrinkage parameter $\lambda$ follows a beta prime distribution. This is achieved through the scale mixture representation of the beta prime distribution, for which given two random variables $x$ and $s$ such that
\begin{align}
\begin{split}
  x|\theta \; \sim \; {\rm G} (a, z)\;\;\;\; {\rm and} \;\;\;\; \theta \; \sim \; {\rm IG} (b, z);
\end{split}
\end{align}
then $x | \theta \sim B'(a,b)$. Using this decomposition, we can directly compare the prior employed by GIGG (left) and our own (right):
\begin{align}
\begin{split}
  \lambda_{gj}^2 \; \sim \; {\rm IG} (b_g, 1)\;\;\;\; &\;\;\;\; \lambda_{gj}^2 \; \sim \; B' (a_g, b_g)\\
  \delta_g^2 \; \sim \; {\rm G} (a_g, 1) \;\;\;\; &\; \;\;\;\; \delta_g^2 \; \sim \; B' (a, b)\\
\end{split}
\end{align}
In \citet{boss2023group}, they established $a_g = 1/n$ and learn $b_g$ through marginal maximum likelihood estimation (MMLE). This implies a certain alignment with our approach, suggesting a link between our group shrinkage hyperparameter's $a$ and their $a_g$, given that their $a_g$ controls the degree of shrinkage in $\delta$. Consequently, as demonstrated in Table \ref{tab:grouped_regression_results}, when conducting the group regression experiment to compare against GIGG, we provide two distinct configurations for our method. The first configuration involves setting $a = 1/n$ and only allowing the estimation of $b$. In the second scenario, both $a$ and $b$ are allowed to be estimated using the mgrad algorithm. In both cases, the hyperparameters for $\lambda^2$, ($a_g, b_g$) are subject to estimation.

We present four distinct yet related experimental settings, as summarized in Table \ref{tab: experiment details}. The first two experiments are adapted from \citet{boss2023group}. In the first setting, the signal is concentrated in one of the regressors within each of the five groups, while in the second setting, the signal is more evenly distributed across all regressors but limited to only the first group. To further explore different scenarios, we introduce two additional variations: the third experiment involves dense signals (more than half of the regressors with signals) in all six groups, and the fourth experiment features a combination of dense signals in half of the groups and few to no signal regressors in the other half. Pairwise correlations within each group are set to 0.8 and the pairwise correlation across groups is 0.2. In all simulation settings, the residual error variance $\sigma^2$ is selected to achieve varying signal-to-noise ratios, ${\rm SNR} \in (0.2, 1, 5)$ with
\[
    \sigma^2 = \left( \frac{1- {\rm SNR}}{{\rm SNR}} \right) {\bm {\beta}}^{\rm T} \Sigma {\bm {\beta}}.
\]
Following \cite{boss2023group}, we evaluate the estimation properties based on the empirical mean-squared error (MSE) stratified by null and non-null coefficients across 100 replicates. Specifically, we calculate the $\hat{\rm MSE}$ as the average squared difference between the estimated $\hat{\bm \beta}$ and the true $\bm \beta$ for each of the 100 simulated datasets. The true coefficients for each of the experimental settings are given in Table \ref{tab: experiment details}.
{\renewcommand{\arraystretch}{1.4}%
\begin{table}[bt]
\footnotesize
\captionsetup{skip=0pt,belowskip=0pt, font = footnotesize}
\caption{Details of the experimental settings. AV / GS represents the ratio of active variables to the group size, Coef is the true $\beta$ coefficients assigned to each group and U($v$) denotes uniform sampling from the vector $v$, where the size corresponds to the number of active variables within each group.}
\vskip -0.1in
\label{tab: experiment details}
\begin{center}
\begin{tabular}{cC{1.2cm}cC{1.2cm}cC{1.2cm}cC{1.2cm}c}
\toprule
& \multicolumn{4}{c}{$p = 50, n = 500$} & \multicolumn{2}{c}{$p = 100, n = 300$}& \multicolumn{2}{c}{$p = 80, n = 200$}\\ 
\cmidrule(lr){2-5}\cmidrule(lr){6-7}\cmidrule(l){8-9}
& \multicolumn{2}{c}{Concentrated} & \multicolumn{2}{c}{Distributed} & \multicolumn{2}{c}{Dense} & \multicolumn{2}{c}{Half}\\ 
\cmidrule(lr){2-5}\cmidrule(lr){6-7}\cmidrule(l){8-9}
Group & AV / GS  & Coef & AV / GS  & Coef & AV / GS  & Coef & AV / GS  & Coef\\
\cmidrule(r){1-1}\cmidrule(lr){2-5}\cmidrule(lr){6-7}\cmidrule(l){8-9}
1 & 1/10 & 0.5 & 10/10 & 0.5 & 27/30 & [3,5,8] & 22/25 & 0.8\\
2 & 1/10 & 1 & 0/10 & - & 8/10 & [2,3,4,5] & 0/10 & -\\
3 & 1/10 & 1.5 & 0/10 & - & 18/20 & 7.5 & 3/10 & 2.5\\
4 & 1/10 & 2 & 0/10 & - & 4/5 & $[9.5, 8, 7]$ & 8/10 & 1.5\\
5 & 1/10 & 2 & 0/10 & - & 14/15 & 1.5 & 0/5 & -\\
6 & {---} & {---} & {---} & {---} & 18/20 & 0.5 & 4/15 & $[1,2,3,5]$\\
7 & {---} & {---} & {---} & {---} & {---} & {---} & 1/5 & 2\\
\hline
\end{tabular}
\end{center}
\vskip -0.1in
\end{table}}
The results for both standard linear regression and grouped linear regression are presented in Table \ref{tab:linear_regression_results} and Table \ref{tab:grouped_regression_results} respectively. The experimental settings for both scenarios remain consistent, except that in standard linear regression, grouping information is not passed into the estimator during model fitting. 

\paragraph*{Standard Linear Regression}
To apply GRASP to standard linear regression problems (i.e., without grouping), we exclude the grouped shrinkage parameters from the model - specifically, we do not sample the group-level shrinkage parameters $\delta$ and $\zeta$ in hierarchy \eqref{Revised_grouped_regression_hierarchy}. We refer to this version of the model as RASP (Regression with Adaptive Shrinkage Priors). In this experiment, we compare the RASP against the NBP estimator implemented in the \texttt{nbp} R function. For each estimator, 10,000 samples were drawn from the corresponding posterior distribution with a burnin period of 10,000 samples. Given the absence of a thinning option in the \texttt{nbp} R function, and to ensure fair time complexity comparisons, we set the thinning level of RASP to be 1 (i.e. no thinning). The results in Table \ref{tab:linear_regression_results} suggest that the NBP estimator consistently performs worse than RASP, particularly when the SNR is low. Its performance deteriorates further when the true regression coefficients are not sparse, offering only marginal improvements in mean squared error (MSE) over the least squares method. Nevertheless, in settings with high SNR, the NBP estimator becomes comparable to RASP, although the latter maintains a slight advantage in terms of prediction accuracy. Overall, RASP consistently provides superior prediction accuracy when compared to the existing state-of-the-art NBP estimator, with the additional advantage of being up to six times faster.


\begin{table}[bt]
\scriptsize
\captionsetup{skip=0pt,belowskip=0pt,font = scriptsize}
\caption{\small Mean-squared errors (MSE) for the standard linear regression problem with \textbf{no grouping} information provided during model fitting. Z0 is the MSE for estimating the null coefficients, NZ0 is the MSE for estimating the null coefficients, and ${\rm OA} = {\rm Z0} + {\rm NZ0}$ is the overall MSE. Cells in bold highlight the estimator with the lowest overall MSE, one for each distinct signal-to-noise ratio (SNR) in the problem. The results for three estimators are presented - the ordinary least squares estimator (OLS), the normal-beta prime estimator using the \texttt{nbp} R function (NBP), and RASP.}
\label{tab:linear_regression_results}
\begin{center}
\begin{sc}
\begin{tabular}{@{}l@{}l@{\hskip 0.03in}c@{\hskip 0.05in}c@{\hskip 0.05in}c@{\hskip 0.03in}c@{\hskip 0.06in}c@{\hskip 0.05in}c@{\hskip 0.05in}c@{\hskip 0.03in}c@{\hskip 0.06in}c@{\hskip 0.05in}c@{\hskip 0.05in}c@{\hskip 0.03in}c@{\hskip 0.06in}c@{\hskip 0.05in}c@{\hskip 0.05in}c@{\hskip 0.03in}c@{}}
\toprule
& & \multicolumn{4}{c}{Concentrated} & \multicolumn{4}{c}{Distributed}& \multicolumn{4}{c}{Half} & \multicolumn{4}{c}{Dense}\\
\cmidrule(lr){3-6} \cmidrule(lr){7-10} \cmidrule(lr){11-14}\cmidrule(l){15-18}
SNR &  & Z0 & NZ0 & OA & Time & Z0 & NZ0 & OA & Time & Z0 & NZ0 & OA & Time & Z0 & NZ0 & OA & Time \\
\midrule
\multirow{3}{*}{0.2} & OLS  & 212.8 & 25.5 & 238.3 & 0.01 & 467.2 & 112.9 & 580.2 & 0.01 & 35466 & 33141 & 68608 & 0.01 & 91740 & 285864 & 377604 & 0.01\\
& NBP             & 18.92 & 8.92 & 27.84 & 32.6 & 432.9 & 105.1 & 538.0 & 32.4 & 34878 & 32445 & 67323 & 62.6 & 93328 & 260839 & 354167 & 104 \\
& RASP        & 2.79 & 8.81 & \textbf{11.6} & 8.82 & 5.96 & 5.61 & \textbf{11.6} & 8.81 & 88.7 & 99.9 & \textbf{188.6} & 10.4 & 163.2 & 765.5 & \textbf{928.8} & 13.6 \\
\\
\multirow{3}{*}{1} & OLS              & 8.52 & 1.02 & 9.53 & 0.01 & 18.69 & 4.52 & 23.21 & 0.01 & 1418 & 1325 & 2744 & 0.01 & 3669 & 11434 & 15104 & 0.01 \\
& NBP           &  1.06 & 1.13 & 2.19 & 32.1  & 3.13 & 3.58 & 6.71 & 32.2 & 1070 & 1000 & 2070 & 67.3 & 3561 & 11113 & 14675 & 99.7\\
& RASP        & 0.89 & 1.26 & \textbf{2.16} & 8.83  & 1.65 & 3.08 & \textbf{4.73} & 8.85 & 44.1 & 74.3 & \textbf{118.4} & 10.5 & 182.7 & 501.3 & \textbf{683.9} & 13.1\\
\\
\multirow{3}{*}{5} & OLS              & 0.34 & 0.04 & 0.38 & 0.01 & 0.75 & 0.18 & 0.93 & 0.01 & 56.75 & 53.03 & 109.8 & 0.01 & 146.8 & 457.4 & 604.2 & 0.01 \\
& NBP             & 0.04 & 0.03 & 0.07 & 32.1 & 0.12 & 0.20 & 0.32 & 35.2 & 12.15 & 25.12 & 37.27 &  66.1 & 41.84 & 174.5 & 216.3 & 95.1\\
& RASP        & 0.02 & 0.02 & \textbf{0.04} & 8.89 & 0.09 & 0.22 & \textbf{0.31} & 9.64 & 12.81 & 22.95 & \textbf{35.76} & 10.3 & 42.14 & 153.2& \textbf{195.3} & 12.4 \\
\bottomrule
\end{tabular}
\end{sc}
\end{center}
\end{table}

\paragraph*{Grouped Linear Regression}
In this experiment, we compare GRASP against the GIGG estimator from the \texttt{gigg} R package. For each estimator, 10,000 samples were drawn from the corresponding posterior distribution with a burnin period of 10,000 samples and a thinning level of 5. The results presented in Table~\ref{tab:grouped_regression_results} suggest that with increasing signal-to-noise ratio (SNR), the predictive performance of all estimators (excluding OLS) tends to level out. However, in scenarios with low SNR, GRASP consistently demonstrates superior overall MSE performance. When SNR is set to 1, GIGG occasionally outperforms GRASP, but we find that by aligning certain settings with their implementation, particularly by setting $a_{\delta}$ to $1/n$, GRASP's performance aligns more closely with GIGG. From a broader perspective, the average predictive performance of both the GIGG estimator and GRASP appears comparable. However, the results do suggest that GRASP has an advantage over the GIGG in low-sparsity, low-SNR settings, achieving more accurate coefficient estimates with lower MSE.
\begin{table}[t]
\scriptsize
\captionsetup{skip=0pt,belowskip=0pt, font = scriptsize}
\caption{Mean-squared errors (MSE) for the grouped linear regression problem. Z0 is the MSE for estimating the null coefficients, NZ0 is the MSE for estimating the null coefficients, and ${\rm OA} = {\rm Z0} + {\rm NZ0}$ is the overall MSE. Cells in bold highlight the estimator with the lowest overall MSE, one for each distinct signal-to-noise ratio (SNR) in the problem. OLS records the results for the ordinary least squares estimator; GIGG records the results using the \texttt{gigg} R function, and the remaining two methods record the results of GRASP under two different settings: the first with a fixed group level hyperparameter $a_\delta = 1/n$, and only $b_\delta$ is learned, and the second setting where both $a_\delta$, $b_\delta$ are estimated.}
\label{tab:grouped_regression_results}
\begin{center}
\begin{sc}
\begin{tabular}{@{}l@{\hskip 0.05in}l@{\hskip 0.1in}c@{\hskip 0.05in}c@{\hskip 0.05in}c@{\hskip 0.1in}c@{\hskip 0.05in}c@{\hskip 0.05in}c@{\hskip 0.1in}c@{\hskip 0.05in}c@{\hskip 0.05in}c@{\hskip 0.1in}c@{\hskip 0.05in}c@{\hskip 0.05in}c@{}}
\toprule
& & \multicolumn{3}{c}{Concentrated} & \multicolumn{3}{c}{Distributed}& \multicolumn{3}{c}{Half} & \multicolumn{3}{c}{Dense}\\
SNR & Method & Z0 & NZ0 & OA & Z0 & NZ0 & OA & Z0 & NZ0 & OA & Z0 & NZ0 & OA \\
\midrule
\multirow{5}{*}{0.2} & GIGG($a_g = 1/n, b_g = ?$)          & 4.26 & 9.37 & 13.63 & 2.38 & 11.73 & 14.11 & 183.3 & 315.2 & 498.5 & 458.5 & 1210 & 1668  \\
\\[-1ex]
& GRASP\\
& $a_\delta = 1/n$, $b_\delta = ?$    & 3.81 & 9.51 & 13.32 & 1.55 & 10.04 & \textbf{11.59} & 150.4 & 293.1 & 443.5 & 380.3 & 1118 & 1498 \\
& $a_\delta = ?$, $b_\delta = ?$      & 3.50 & 9.13 & \textbf{12.63} & 2.97 & 8.70 & 11.67 & 159.9 & 258.9 & \textbf{418.8} & 348.6 & 1030 & \textbf{1379} \\
\\[-1ex]
\hline
\\[-1ex]
\multirow{5}{*}{1} & GIGG($a_g = 1/n, b_g = ?$)          &  0.67 & 1.14 & \textbf{1.81} & 0.119 & 2.449 & \textbf{2.568} & 	78.77 & 219.8 & 298.5 & 232.1 & 585.1 & 817.2 \\
\\[-1ex]
& GRASP\\
& $a_\delta = 1/n$, $b_\delta = ?$    & 0.82 & 1.26 & 2.08 & 0.071 & 2.619 & 2.690 & 74.57 & 193.1 & 267.7 & 206.5 & 514.2 & 720.7 \\
& $a_\delta = ?$, $b_\delta = ?$      & 0.97 & 1.38 & 2.36 & 0.226 & 2.815 & 3.041 & 77.93 & 178.6 & \textbf{256.6} & 194.7 & 498.8 & \textbf{693.5} \\
\\[-1ex]
\hline
\\[-1ex]
\multirow{5}{*}{5} & GIGG($a_g = 1/n, b_g = ?$) & 0.02 & 0.02 & 0.04 & 0.006 & 0.161 & 0.167 & 13.92 & 37.75 & \textbf{51.67} & 43.87 & 121.0 & \textbf{164.9}  \\
\\[-1ex]
& GRASP\\
& $a_\delta = 1/n$, $b_\delta = ?$    & 0.02 & 0.02 & 0.04 & 0.003 & 0.163 & \textbf{0.166} & 13.53 & 39.58 & 53.11 & 39.44 & 128.1 & 167.6 \\
& $a_\delta = ?$, $b_\delta = ?$      & 0.03 & 0.02 & 0.05 & 0.012 & 0.163 & 0.175 & 15.47 & 38.71 & 54.18 & 39.6 & 125.7 & 165.3\\
\bottomrule
\end{tabular}
\end{sc}
\end{center}
\end{table}

\section{Real-world Datasets} \label{sec:realWorldDataset}
We evaluate GRASP on two real-world datasets. The first dataset, sourced from the \texttt{sparsegl} R package, examines trustworthiness in information sources during the COVID-19 pandemic. The second dataset, from the \texttt{BSGS} R package, investigates cross-country factors contributing to the 2008–09 economic crisis. Both datasets exhibit group structure in the predictors, making grouped regression a natural modeling choice. We compare GRASP against two benchmark methods: the grouped inverse gamma gamma (GIGG) estimator and the sparse group lasso (SparseGL) implemented in the \texttt{sparsegl} package. For each dataset, we performed a random 70/30 train-test split and repeated the experiment 100 times. Model performance is assessed using mean squared prediction error (MSPE), with results summarized in Table \ref{tab:realdata_results}.

Overall, the adaptive grouped regression frameworks demonstrate superior predictive performance compared to SparseGL, as expected due to its adaptive shrinkage properties. Among the adaptive grouped estimators, GRASP slightly outperforms GIGG in terms of MSPE, highlighting its effectiveness in real-world applications. Further details on the datasets, experimental setup and results discussion are provided in Sections \ref{sec: trust_results} and \ref{sec: economic_results}, respectively.
\begin{table*}[t]
\caption{Comparing the predictive performance of GRASP, GIGG and SparseGL on real-world datasets. Reported values are the mean squared prediction error averaged across 100 repetition of a 70/30 train test split with the standard error in parenthesis. $n$ is the training size, $p$ is the number of predictors including the bspline expansion and dummy variable transformations. Bold values indicate the best-performing model for each dataset.}\label{tab:realdata_results}
\vspace{-2mm}
\begin{center}
\begin{tabular}{lccccc}
\toprule
Dataset & $n$ & $p$ & GRASP & GIGG & SparseGL \\
\cmidrule(r){1-1}\cmidrule(lr){2-3}\cmidrule(l){4-6}
Covid-19 Trust & 6831 & 101 & \textbf{33.80} (0.09) & 33.81 (0.09) & 34.57 (0.09) \\
Economic Crisis & 50 & 71 & \textbf{33.90} (0.98) & 35.48 (0.99) & 43.99 (1.66) \\
\bottomrule
\end{tabular}   
\end{center}
\vspace{-1mm}
\end{table*}

\subsection{Trust in Experts During COVID-19: A U.S. Survey Dataset} \label{sec: trust_results}
Following the experimental setup outlined in \citet{liang2022sparsegl}, we analyze the data from The Delphi Group at Carnegie Mellon University U.S. COVID-19 Trends and Impact Survey. The response variable, \texttt{trust in experts}, represents the estimated percentage of respondents who trust experts such as doctors, the Centers for Disease Control (CDC), and government health officials. The dataset includes five categorical — month of report, state of residence, age group, race/ethnicity, and self-reported gender identity; along with two continuous variables: 
\begin{itemize}
    \item \texttt{hh\_cmnty\_cli}: The percentage of people who know someone with COVID-like illness in their community (including their household)
    \item \texttt{cli}: The percentage of people experiencing COVID-like illness
\end{itemize}
To capture potential nonlinear effects, we apply a B-spline basis expansion with 10 degrees of freedom (9 knots with cubic splines) to the two continuous variables. 

This experiment highlights two common applications of grouped regression: (1) using discrete factors as predictors, and (2) employing additive models where continuous predictors are expanded into basis functions. As a result, the dataset includes seven predictor groups (5 categorical factors and 2 continuous basis expansions). The design matrix is constructed as described in \citet{liang2022sparsegl} or can be found in the code on our GitHub page \footnote{https://github.com/shuyu-tew/GRASP.git}. After converting the categorical variables to dummy variables and including the B-spline expansions, the final design matrix contains 9759 observations and 101 predictors.

Figure \ref{fig: non-linear effect} illustrates the estimated nonlinear effects of the two continuous predictors. Both GIGG and GRASP produce similar nonlinear trends; however, GRASP exhibits narrower 95\% credible intervals than GIGG, particularly in regions with lower data density, indicating greater confidence in its estimates. Notably, the SparseGL implementation of the sparse group lasso does not inherently provide such credible intervals.

For \texttt{hh\_cmnty\_cli}, the general effect trend starts with a plateau at low values before becoming increasingly positive as the percentage of people who know someone with COVID-like illness in their community increases. This suggests that individuals tend to trust experts more when they perceive a higher prevalence of COVID-like illness around them. However, SparseGL captures a slightly different pattern at lower values: initially increasing for \texttt{hh\_cmnty\_cli} below 5\%, then the estimated effect dipped below zero before following the general upward trend. This initial deviation, however, remains within the 95\% credible intervals of GRASP and GIGG. Notably, we observe a similar dip for GRASP and GIGG, where the effect briefly falls below zero for small values of \texttt{hh\_cmnty\_cli} before rising again for larger values.

As for the effect of \texttt{cli} (percentage of people experiencing COVID-like illness), it is predominantly negative, with the magnitude of the effect increasing for larger \texttt{cli} values. This indicates that personal experiences with COVID-like illness may erode trust in experts.

\begin{figure}[t]
     \centering
     \begin{subfigure}[b]{0.48\textwidth}
         \centering
         \includegraphics[width=\textwidth]{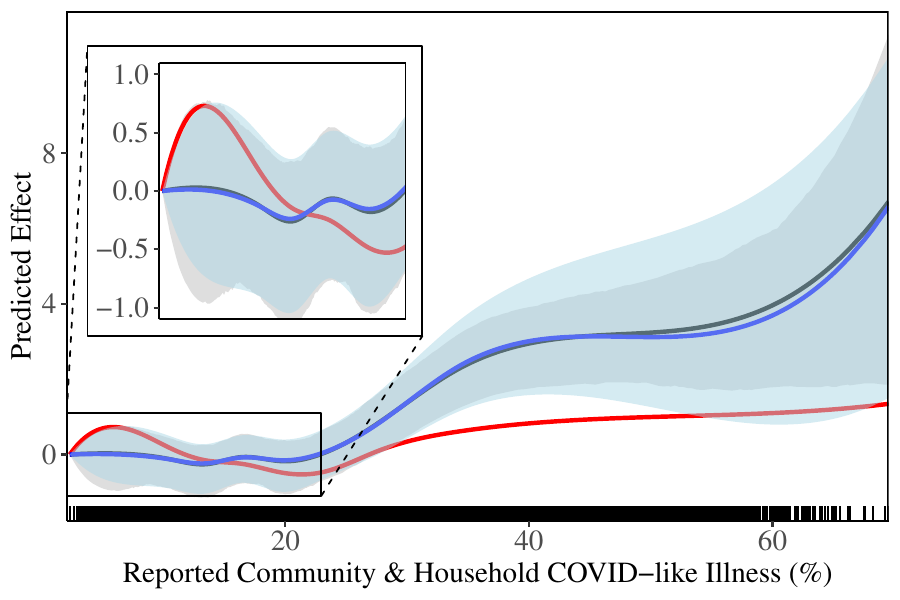}
         \caption{}
     \end{subfigure}
     \hspace{1em}%
     \begin{subfigure}[b]{0.48\textwidth}
         \centering
         \includegraphics[width=\textwidth]{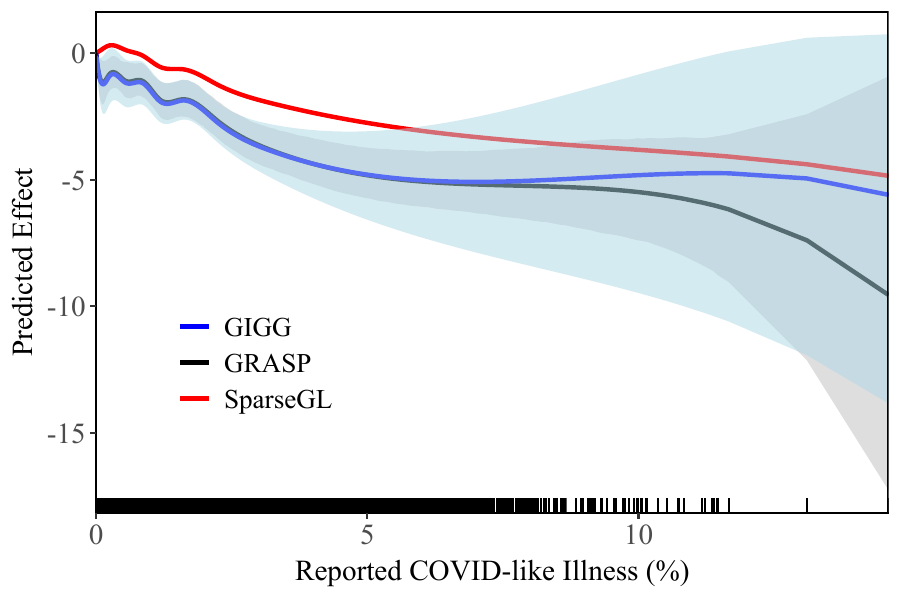}
         \caption{}
     \end{subfigure}
     \caption{Estimated nonlinear effects of (a) \texttt{hh\_cmnty\_cli} - the percentage of people who know someone with COVID-like illness in their community; and (b) \texttt{cli} - the percentage of people experiencing COVID-like illness, conditional on all other variables in the model being set to zero. } \label{fig: non-linear effect}
\end{figure}

\subsection{Economic Dataset: 2008–09 Crisis Cross-Country Analysis} \label{sec: economic_results}
To further evaluate the performance of GRASP, we analyze the balanced economic dataset used in \citet{lee2015bsgs, ho2015Looking}. This dataset is a refined version of the one originally collected by \citet{rose2011cross}, with missing data removed. This study investigates the factors contributing to cross-country variation in the macroeconomic impact of the 2008–09 global financial crisis. The response variable, real GDP growth rate from 2008 to 2009, serves as a measure of economic resilience during the crisis. The dataset includes 51 predictor variables for a sample of 72 countries, categorized into nine theoretical groups representing different potential origins of the crisis. The groups and the number of variables within each group (indicated in parentheses) are as follows: principal factors (10), financial policies (3), financial conditions (4), asset price appreciation (2), macroeconomic policies (4), institutions (11), geography (4), financial linkages (1), and trade linkages (12). A detailed breakdown of these groups and the variables within each group is provided in Table \ref{tab:economic_variables}, while the definitions of each variable can be found in Table 1 of \citet{ho2015Looking}.

Table \ref{tab:economic_variables} compares the average coefficients estimated by GRASP, GIGG, and SparseGL, along with the percentage of times the groups and variables were selected by SparseGL over 100 repetitions of the experiment (70-30 train-test splits). The two most frequently selected groups in SparseGL are ``Institutions'' and ``Trade Linkages'', aligning partially with findings from \citet{lee2015bsgs}, who identify ``Financial Policies'' and ``Trade Linkages'' as key contributors to the crisis. Our results are consistent with those of \citet{ho2015Looking}, who also identify specific variables within the ``Institutions'' group as important. The percentage of times variables are selected by SparseGL generally matches the ranking of variable importance reported in \citet{ho2015Looking}, lending credibility to our findings.

A limitation of GRASP and GIGG is their inability to produce sparse solution. However, this limitation is context-dependent. In our experiments, we observe that GRASP and GIGG achieve significantly lower prediction errors compared to SparseGL. This is expected, as taking the posterior mean typically improves predictive performance by including all variables in the model for prediction, rather than inducing sparsity. While this approach enhances prediction accuracy, it may come at the cost of model simplicity and interpretability, as the resulting models are not sparse.

That said, we note that for groups selected less than 1\% of the time by SparseGL (e.g., ``Asset Price Appreciation'' and ``Financial Conditions''), the corresponding coefficients estimated by GRASP and GIGG are close to zero (i.e., $<|0.01|$ ). Sparsity can be indirectly achieved when GRASP and GIGG are paired with thresholding methods or post hoc Bayesian variable selection techniques, such as the Decoupled shrinkage and selection (DSS) procedure or Stochastic Search Variable Selection (SSVS). These methods leverage the fact that GRASP and GIGG effectively shrink the coefficients of less relevant variables toward zero, enabling variable selection in a subsequent step. Nevertheless, explicit sparsity may still be desirable in certain applications to improve interpretability and practical utility. Future work will explore incorporating sparsity into GRASP using the EM algorithm proposed in \citet{tew2022sparse}. This extension would allow GRASP to balance the trade-off between prediction accuracy and model interpretability, making it more versatile for a wider range of applications.
\begin{tabularx}{\textwidth}{@{}>{\small}l>{\small}c>{\small}c>{\small}c@{}}
\caption{The nine groups of predictors and their corresponding variables used in the analysis of the 2008–09 financial crisis dataset. The values represents the average estimated coefficients from GRASP, GIGG, and SparseGL, along with the percentage of times each group and variable is selected by SparseGL over 100 repetitions of the experiment (70-30 train-test split) in parentheses.} \label{tab:economic_variables} \\
\toprule
Variable & GRASP & GIGG & sparsegl (selected \%) \\
\midrule
\underline{\textit{Principal Factors}} \\
Log (2006 population) Trust & 0.05 & 0.07 & $<|0.01|$ (6)  \\
Log (2006 real GDP per capita) & -0.42 & -0.44 & $<|0.01|$ (6)\\
OECD dummy & -0.41 & -0.22 & -0.04 (6) \\
High-income, non-OECD dummy & 0.04 & 0.05 & $<|0.01|$ (6)\\
Developing East Asia, Pacific dummy & 0.40 & 0.09 & 0.05 (6)\\
Developing Eastern Europe, Central Asia dummy & -0.63 & -0.31 & -0.04 (6)\\
Developing Latin American, Caribbean dummy & -0.05 & -0.03 & 0.01 (5) \\
Developing South Asia dummy & 0.90 & 0.07 & 0.06 (5) \\
Developing Sub-Saharan Africa dummy & 0.78 & 0.10 & 0.05 (4)\\
Stock market growth, 2003-06 & 0.17 & 0.14 & 0.03 (6) \\ \vspace{-2mm}
\\
\underline{\textit{Financial policies}} \\
Credit market regulation, 2006 & -0.73 & -0.54 & $<|0.01|$ (1)\\
Private bank ownership, 2006 & -0.17 & -0.23 & $<|0.01|$ (1)\\
Interest rate controls, 2006 & $<|0.01|$ & $<|0.01|$ & 0 (0) \\ \vspace{-2mm}
\\
\underline{\textit{Financial conditions}} \\
Domestic credit private sector, \% GDP 2006 & $<|0.01|$ &$<|0.01|$ & $<|0.01|$ (1) \\
Domestic bank credit, \% GDP 2006 & $<|0.01|$ & $<|0.01|$ & $<|0.01|$ (1) \\
Private sector credit assess, 2006 & $<|0.01|$ & $<|0.01|$ & $<|0.01|$ (1) \\
Bank liquid reserves, \% assets 2006 & $<|0.01|$ & $<|0.01|$ & $<|0.01|$(1) \\ \vspace{-2mm}
\\
\underline{\textit{Asset price appreciation}} \\
Market capitalization, \% GDP 2006 & $<|0.01|$ & $<|0.01|$ & 0 (0) \\
Stocks traded, \% GDP 2006 & $<|0.01|$ & $<|0.01|$ & 0 (0) \\ \vspace{-2mm}
\\ 
\underline{\textit{Macroeconomic policies}} \\
Currency union member, 2006 & -0.27 & -0.06 & -0.02 (1) \\
EU but non-EMU member, 2006 & -0.04 & -0.01 & -0.01 (1) \\
Inflation targeter, 2006 & -0.02 & $<|0.01|$ & $<|0.01|$(1) \\
CPI inflation, 2006 & -0.07 & -0.08 & $<|0.01|$ (1) \\ \vspace{-2mm}
\\ 
\underline{\textit{Institutions}} \\
Credit/labour/business regulation, EFW 2006 & -0.09 & -0.07 & $<|0.01|$ (48) \\
Overall economic freedom, 2006 & 0.06 & 0.01 & $<|0.01|$ (48)\\
Common law country & 0.01 & 0.03 & -0.01 (41)\\
Control of corruption & 0.14 & 0.13 & -0.05 (48)\\
Regulatory quality & -0.42 & -0.23 & -0.09 (48)\\
Rule of law & 0.13 & 0.13 & -0.06 (48)\\
Political rights, 2006 & 0.18 & 0.20 & 0.02 (48)\\
Civil liberties, 2006 & 0.40 & 0.45 & 0.02 (48)\\
Government size, 2006 & 0.18 & 0.21 & $<|0.01|$ (46)\\
Legal security of property rights, 2006 & -0.23 & -0.26 & $<|0.01|$ (48)\\
Sound money access, 2006 & 0.21 & 0.14 & $<|0.01|$ (46) \\ \vspace{-2mm}
\\ 
\underline{\textit{Geography}} \\
Log of latitude & -0.18 & -0.16 & $<|0.01|$ (13)\\
East Asian dummy & 0.14 & 0.04 & 0.04 (13)\\
Commodity exporter dummy & 0.24 & 0.17 & 0.03 (13)\\
English language dummy & -0.27 & -0.07 & -0.02 (13)\\ \vspace{-2mm}
\\
\underline{\textit{Financial linkages}} \\
BIS consolidated banking share (exposure to UK) & -13.8 & -0.35 & -1.96 (14) \\ \vspace{-2mm}
\\
\underline{\textit{Trade linkages}} \\
Exports exposure to USA & $<|0.01|$ & $<|0.01|$ & $<|0.01|$ (78)\\
Exports exposure to UK & 0.03 & 0.03 & -0.01 (84)\\
Exports exposure to Germany & -0.01 & -0.02 & -0.02 (94)\\
Exports exposure to Japan & 0.08 & 0.06 & 0.06 (94)\\
Exports exposure to Spain & 0.07 & 0.04 & 0.04 (92)\\
Exports exposure to small crises & -0.23 & -0.24 & -0.15 (94)\\
Trade exposure to USA & -0.02 & -0.04 & $<|0.01|$ (85)\\
Trade exposure to UK & -0.09 & -0.08 & -0.05 (94)\\
Trade exposure to Germany & -0.02 & -0.03 & -0.02 (94)\\
Trade exposure to Japan & 0.13 & 0.11 & 0.07 (94)\\
Trade exposure to Spain & -0.02 & -0.02 & 0.02 (89)\\
Trade exposure to small crises & -0.43 & -0.46 & -0.16 (94)\\
\bottomrule
\end{tabularx}

\section{Discussion}\label{sec:conclusion}

In this paper, we introduced a novel adaptive grouped regression method with the normal beta prime (NBP) prior. Our experimental results demonstrate that the proposed method is highly competitive, even in standard linear regression settings where group structure information is absent. Specifically, RASP outperforms existing adaptive normal-beta prime estimators, such as the one implemented in the \textsc{nbp} R package \cite{bai2021beta}, in terms of both predictive performance and computational efficiency. 

In the context of grouped regression, GRASP provides a conceptually straightforward framework while maintaining strong predictive accuracy. Notably, in scenarios with lower sparsity levels among the coefficients and lower signal, our method proves to give better predictive performance be highly competitive with the state-of-the-art GIGG prior \cite{boss2023group}. This suggests that the key to achieving superior performance in grouped regression lies in introducing adaptability into the shrinkage parameters through the use of adaptive priors. We argue that any prior exhibiting behavior similar to the NBP prior could serve this purpose effectively. The GIGG hierarchy proposed by \citet{boss2023group} is an elegant construction in which, when there is only a single group ($G=1$), the marginal prior on regression coefficients reduces to the horseshoe prior. Their work demonstrates the superior performance of GIGG compared to the grouped half-Cauchy hierarchy of \citet{xu2016bayesian}. However, this comparison may not be entirely fair, as \citet{xu2016bayesian} does not incorporate adaptivity in its shrinkage parameters. Our results show that once adaptivity is introduced in both local and group-level shrinkage parameters — by assigning the beta prime prior to both — the conceptually simpler hierarchy of \citet{xu2016bayesian} can perform as effectively as GIGG. Additionally, the proposed grouped NBP estimator has the potential to handle multiple and overlapping group structures, a setting for which it is not immediately clear how the GIGG hierarchy could be extended.

Another key contribution of this work is the introduction of a framework to explicitly quantify the correlation between effective local shrinkage parameters within a group. This framework allows us to explore the behavior of these correlations under different prior configurations, providing new insights into the interplay between group structure and shrinkage behavior. To the best of our knowledge, this is the first time such a framework has been proposed, and it opens up new avenues for understanding and designing grouped regression models. Future work will explore extending our method to incorporate sparsity directly into the grouped NBP framework, potentially enhancing its interpretability while preserving its predictive power.


\section{Acknowledgments and Disclosure of Funding}
This work was supported by the Australian Research Council (DP210100045).

\newpage

\appendix

\section{Proof of Proposition \eqref{propo:corr_delta_lambda}} \label{apx: prop_proof}
The correlation between two random variables $X$ and $Y$ is defined as:
\begin{equation}
    {\rm corr}(X, Y) = \frac{\E{}{XY} - \E{}{X}\E{}{Y}}{ \sqrt{\V{}{X} \V{}{Y}} }
\end{equation}
To quantify the correlation between the effective shrinkage parameters within a group, we are interested in ${\rm corr}(\log(\delta_g\lambda_{gi}), \log(\delta_g\lambda_{gj}))$. Taking the logarithm of the product of two variables is equivalent to adding their logarithms. Therefore, we can write $\log(\delta_g\lambda_{gj}) = \log(\delta_g) + \log(\lambda_{gj})$. Let $d \equiv \log \delta$, $l_1 \equiv \log \lambda_{g1}$ and $l_2 \equiv \log \lambda_{g2}$, the correlation between the shrinkage factors within the same group becomes
\begin{align*}
    {\rm corr}(l_1 + d, l_2 + d) &= \frac{\E{}{(l_1 + d)(l_2 + d)} - \E{}{(l_1 + d)}\E{}{(l_2 + d)}}{ \sqrt{\V{}{l_1 + d} \V{}{l_2 + d}} }\\
    &= \frac{\V{}{d}}{ \sqrt{(\V{}{d} + \V{}{l_1}) (\V{}{d} + \V{}{l_2})} }
\end{align*}
Since $l_1$ and $l_2$ belong to the same group, they share the same hyperparameters, hence $\V{}{l_1} = \V{}{l_2} = \V{}{\log(\lambda_g)}$. This simplifies the correlation expression to:
\begin{equation} \label{eq_apx: corr}
    {\rm corr}(l_1 + d, l_2 + d) = \frac{\V{}{d}}{ \V{}{d} + \V{}{\log(\lambda_g)} }
\end{equation}

\subsection{Application to GIGG} \label{apx: prop_GIGG}
The group inverse-gamma gamma (GIGG) hierarchy proposed by \cite{boss2023group} is defined as follows
\begin{align*}
\begin{split}
  \lambda^2_{gj} \; &\sim \; {\rm IG}(b_g, 1) \\
  \delta^2_g \; &\sim \; {\rm Gamma}(a_g, 1) \\
\end{split}
\end{align*}
To compute the correlation between the effective log transformed shrinkage parameters within a group (i.e. ${\rm corr}(\log(\delta^2_g\lambda^2_{gi}), \log(\delta^2_g\lambda^2_{gj}))$), we use the following properties:
\begin{itemize}
    \item If $X \sim {\rm Gamma}(a, \theta)$, then $\V{}{\log X} = \phi'(a)$.
    \item For $X \sim {\rm IG}(a, 1/\theta)$, it follows that $1/X \sim {\rm Gamma}(a, \theta)$. The variance of $\log X$ is equivalent to that of its reciprocal since $\log(1/X) = - \log X$, and $\V{}{- \log X} = \V{}{\log X} = \phi'(a)$.
\end{itemize}
Applying these properties, the variances for the log-transformed local and group shrinkage parameters under the GIGG hierarchy are:
\begin{align*}
    \V{}{\log \lambda^2_{gj}} &= \phi'(a_g)\\
    \V{}{\log \delta^2_{g}} &= \phi'(b_g).
\end{align*}
Substituting these into the derived correlation formula in Equation \ref{eq_apx: corr}, we obtain the exact correlation under the GIGG hierarchy:
\begin{equation*}
    {\rm corr}(\log(\delta^2_g\lambda^2_{gi}), \log(\delta^2_g\lambda^2_{gj})) = \frac{\phi'(a_g)}{\phi'(a_g) + \phi'(b_g)}.
\end{equation*}

\subsection{Application to GRASP} \label{apx: prop_GNBP}
The proposed GRASP hierarchy consist of placing the beta prime prior at both the local and group shrinkage parameter as given in Equation \ref{hier: betaP_prior} (repeated here for convenience):
\begin{align*}
\begin{split}
  \lambda^2_{gj} \; &\sim \; B'(a_g,b_g) \\
  \delta^2_g \; &\sim \; B'(a, b) \\
\end{split}
\end{align*}
To compute the correlation between the effective log transformed shrinkage parameters within a group (i.e. ${\rm corr}(\log(\delta^2_g\lambda^2_{gi}), \log(\delta^2_g\lambda^2_{gj}))$), we utilize the fact that if $X \sim B'(a, b)$, then $\log X$ follows a generalized logistic distribution. The variance of this distribution is then given by $\V{}{\log X} = \phi'(a) + \phi'(b)$. From this property, the variances for the log-transformed local and group shrinkage parameters can be expressed as:
\begin{align*}
    \V{}{\log \lambda^2_{gj}} &= \phi'(a_g) + \phi'(b_g)\\
    \V{}{\log \delta^2_{g}} &= \phi'(a) + \phi'(b).
\end{align*}
Substituting these into the derived correlation formula in Equation \ref{eq_apx: corr}, we obtain the exact correlation under the GRASP hierarchy:
\begin{equation*}
    {\rm corr}(\log(\delta^2_g\lambda^2_{gi}), \log(\delta^2_g\lambda^2_{gj})) = \frac{\phi'(a) + \phi'(b)}{\phi'(a) + \phi'(b) + \phi'(a_g) + \phi'(b_g)}.
\end{equation*}

\section{Fast and Accurate Approximation of the Beta Shape
Parameters} \label{apx: Fast and Accurate Approximation of the Beta Shape Parameters}
Given $x_1, \cdots, x_n | a, b \sim {\rm Beta}(a, b)$ and a prior for the shape parameters $a$ and $b$, the goal is to approximate the conditional distribution of the shape parameters $a$ and $b$ using a Gamma distribution (i.e. $p(a | x_1, \cdots, x_n, b) \approx {\rm Gamma(k, \theta)}$ and $p(b | x_1, \cdots, x_n, a) \approx {\rm Gamma(k, \theta)}$).

More generally, this proposed algorithm can approximate any target distribution $f(y)$ with an unknown normalizing constant using a proposal distribution from the exponential family. The main idea is to estimate the natural parameters of the proposal distribution that best approximate the target distribution (minimizes the difference between the proposal and target distributions).

\subsection{Algorithm}
To approximate or sample from a target probability distribution $p(x)$, this section presents an algorithm that utilizes a proposal distribution chosen from the exponential family. The goal is to estimate the parameters of this proposal distribution such that it is as close as possible to the target distribution. Consider a parameter vector $\btheta = (\theta_1, \cdots, \theta_m)$. A probability distribution belongs to the exponential family if its probability density function can be expressed as
\begin{equation} \label{eq: exponential_family}
  \begin{aligned}
    q(x | \btheta) = h(x) \exp\left( \bEta(\btheta) \trans \bT(x)  - A(\btheta)\right)
  \end{aligned}
\end{equation}
where $\bT(x)$ is the sufficient statistics of the proposal distribution, $A(\btheta)$ is the normalizing constant that integrates \eqref{eq: exponential_family} to 1 over the sample space of $x$, and $h(x)$ can be viewed as a constant with respect to $\btheta$. The term $\bEta = \bEta(\btheta)$ is defined as the natural parameter, a transformed parameter that converts the exponential family distributions to its canonical form, allowing it to be expressed as a linear function of its sufficient statistics. For example, in the gamma distribution, which is typically parameterized by the shape $k$ and scale $\omega$, the natural parameters are $\bEta(k, \omega) = (k-1, -1/\omega)$, and the sufficient statistics are $(\sum_{i=1}^n \log(x), \sum_{i=1}^n x)$.

The log of equation \eqref{eq: exponential_family} can then be simplified to:
\begin{equation}\label{eq: log_exponential_family}
  \begin{aligned}
    \log q(x | \btheta) &= \log h(x) + \bEta\trans\bT(x)  - A(\btheta) \\
    & = \bEta\trans\bT(x) + c
  \end{aligned}
\end{equation}
where $c$ is the linear combination of the log constants $\log h(x) - A(\btheta)$. Notable, Equation \eqref{eq: log_exponential_family} takes the form of a linear regression. By equating $\log p(x)$ (the log of the target distribution) to the RHS of Equation \eqref{eq: log_exponential_family}:
\begin{equation}\label{eq: linear_log_exponential_family}
    \log p(x) = \bEta\trans\bT(x) + c,
\end{equation}
and solving for the natural parameters $\bEta$ using least squares, we can find the parameters of the proposal distribution that best approximate the target distribution.


The algorithm is summarized in the following steps:
\begin{enumerate}
    \item Select a proposal distribution from the exponential family and express it in the form of Equation \eqref{eq: log_exponential_family} with natural parameters
    \item Choose at least three design points, $x_1, \cdots, x_m$ from the target distribution. Refer to Section \ref{sec: choosing design points} for suggestions on choosing these design points.
    \begin{enumerate}
        \item Compute $\log p(x)$ at each of these points:
        \begin{equation}
            \by = \left[\log p(x_1), \cdots, \log p(x_m) \right]\trans
        \end{equation}
        \item Compute the sufficient statistics $\bT(x)$:
        \begin{equation}
            \bX = \left[\bT(x_1), \cdots, \bT(x_m) \right]\trans
        \end{equation}
    \end{enumerate}

    \item Substitute these values into Equation \eqref{eq: linear_log_exponential_family} to get:
        \[
        \by = \bX\bEta + c.
        \]
        Solve this linear system using least squares to estimate the $\bEta$. If $\bX$ is centered, the intercept term $c$ can be omitted and the least squares estimate is $\hat{\bEta}_{\rm LS} = (\bX\trans\bX)\inv(\bX\trans\by)$. For distributions with more than two sufficient statistics, we suggest selecting at least one additional design point for each additional sufficient statistic.
\end{enumerate}
If the proposal distribution exactly matches the target distribution, this method provides an exact fit. Otherwise, it yields an approximation that minimizes the difference between the proposal and target distributions, typically measured by Kullback-Leibler divergence or mean squared error.

The algorithm described above returns the parameters of the proposal distribution, $\btheta^*$ that best approximate the target distribution. Given these parameters, random samples from the approximated target distribution can be generated using the Metropolis-Hastings algorithm. This involves updating the current value
$x$ by sampling a new proposal $x' \sim q(\btheta^*)$ and accepting the proposal with probability:
\begin{equation}\label{eq:  MH_prob}
    {\rm min}\left\{ 1, \frac{p(x')q(x|\btheta^*)}{p(x)q(x'|\btheta^*)} \right\}.
\end{equation}

This method can be seamlessly integrated into Gibbs sampling procedures. For instance, it can be used to estimate the shape parameters of the beta distribution in the following hierarchical model:
\begin{equation}
    \begin{aligned}
    x_1, \cdots, x_n | a, b &\sim {\rm Beta}(a, b) \\
    a &\sim {\rm C}^+(0,1) \\
    b &\sim {\rm C}^+(0,1) 
  \end{aligned}
\end{equation}
where $C^+(0,1)$ denotes the half-Cauchy distribution. The standard Gibbs sampling approach involves repeatedly iterating:
\begin{enumerate}
    \item Sample $a$ from $p(a| x_1, \cdots, x_n, b)$. \\
    \\
    The target distribution here is the posterior conditional density of $a$, expressed as:
    \begin{equation} \label{eq: log_posterior_a}
        \begin{aligned}
            \log p(a|x_1, \cdots, x_n, b) = (a-1) \sum^n_{i=1} \log (x_i) - n \log {\rm B}(a,b) - \log(a^2 + 1)
        \end{aligned}
    \end{equation}
    where ${\rm B}(\cdot)$ is the beta function. We approximate this distribution using a Gamma distribution (i.e. $p(a | x_1, \cdots, x_n, b) \approx {\rm Gamma(k, \theta)}$). Using the algorithm described earlier (or specifically detailed in Algorithm \ref{algo: beta_approx}), we estimate $k$ and $\theta$, sample a proposal $a' \sim {\rm Gamma}(k, \theta)$ then accept with a probability defined as before.

    \item Sample $b$ from $p(b| x_1, \cdots, x_n, a)$ \\
    \\
    This step is similar to the first, but here the target distribution is the log of the posterior conditional density of $b$:
    \begin{equation} \label{eq: log_posterior_b}
        \begin{aligned}
            \log p(b|x_1, \cdots, x_n, a) = (b-1) \sum^n_{i=1} \log (1-x_i) - n \log {\rm B}(a,b) - \log(b^2 + 1)
        \end{aligned}
    \end{equation}
    The key difference between the Equation \eqref{eq: log_posterior_a} and \eqref{eq: log_posterior_b} lies in the first term. Notably, if $x \sim {\rm Beta} (a,b)$, then $1-x \sim {\rm Beta} (b,a)$. Therefore, the same computational routine developed for sampling $a$ can be applied to $b$, with adjustments to the input parameters, from $a$ to $b$, and from $\sum^n_{i=1} \log (x_i)$ to $\sum^n_{i=1} \log (1-x_i)$, as outlined in Algorithm \ref{algo: beta_sampler}.
\end{enumerate}


\subsection{Choosing the design (data) points of the target distribution} \label{sec: choosing design points}
When selecting the design points, it is important that the points are evenly distributed across the target distribution. This ensures that the approximation captures the correct overall shape of the target distribution, even if the proposal distribution differs greatly from the target. If the design points are clustered in one region, a poor choice of proposal distribution could result in worse approximations. While selecting more design points generally improves the fit, it can reduce the efficiency of the sampler, especially when computational resources are limited.

One effective strategy is to include the mode of the target density as one of the design points. For example, in the case of estimating the parameter $a$ from the conditional posterior of the following hierarchical model:
\begin{equation}
    \begin{aligned}
    x_1, \cdots, x_n | a, b &\sim {\rm Beta}(a, b) \\
    a &\sim {\rm C}^+(0,1)
  \end{aligned}
\end{equation}
where $C^+(0,1)$ denotes the half-Cauchy distribution, the mode of $a$ will be close to the maximum likelihood estimate. Therefore, we can treat the conditional posterior as a likelihood and estimate the mode of $a$ using maximum likelihood estimation. Since there is no closed-form solution for the derivative of the negative log-likelihood, we solve it iteratively using the Newton-Raphson method. The negative log of the posterior conditional density is:
\begin{equation}
    \begin{aligned}
    - \log p(a|x_1, \cdots, x_n, b) = (1-a) \sum^n_{i=1} \log (x_i) + n \log {\rm B}(a,b) + \log(a^2 + 1)
  \end{aligned}
\end{equation}
where ${\rm B}(\cdot)$ is the beta function. The gradient and the Hessian of this target density is then
\begin{equation}
    \begin{aligned}
    \label{eq: gradient_hessian}
    \frac{\partial}{\partial a} - \log p(a|x_1, \cdots, x_n, b) &=  \sum^n_{i=1} \log (x_i) - n \phi(a) - n\phi(a+b) - \frac{2a}{1+a^2},\\
    \frac{\partial^2}{\partial a^2} - \log p(a|x_1, \cdots, x_n, b) &=  - n \phi'(a) - n\phi'(a+b) - \frac{2}{1+a^2} + \frac{4a^2}{(1+a^2)^2},\\
  \end{aligned}
\end{equation}
where $\phi(\cdot)$ and $\phi'(\cdot)$ is the digamma and trigamma function respectively. We then apply the Newton-Raphson method as outlined in Algorithm \ref{algo: Newton-Raphson}, with updates derived in Equation \eqref{eq: gradient_hessian} included.

Once the mode $a_0$ is identified, two additional design points are selected to be approximately one standard deviation away (i.e. $\ba = \left({\rm max}(a_0 - \sqrt{2/H}, a_0/2), a_0, a_0 + \sqrt{2/H} \right)$, where $H$ is the Hessian of the target distribution). For more design points, values two or three standard deviations away can be used, with additional points subsampled between the mode and these extended points.

\begin{algorithm}
    \DontPrintSemicolon
    \SetKwInOut{KwIn}{Input}
    \SetKwInOut{KwOut}{Output}
    \SetKwComment{Comment}{}{}
    
    \KwIn{Samples from target distribution $x_1, \cdots, x_n > 0$, shape parameter $b$}
    \KwOut{Three design points of the target distribution ($a_1, a_2, a_3$). $a_2$ is the estimated mode.}

    \vspace*{3mm}

    \SetKwFunction{FMain}{a\_est}
    \SetKwProg{Fn}{Function}{:}{}
    \Fn{\FMain{$x_1, \cdots, x_n, b$}}{

    ${\rm slogx} = \sum^n_{i = 1}\log(x_i + {\rm eps})$ {\scriptsize \tcp*[f]{eps: machine epsilon (in double precision) to handle $x_i$ being 0 or 1} }\\
    ${\rm slogcx} = \sum^n_{i = 1}\log(1 - x_i + {\rm eps})$ \\
    $c = 0.9$  \\
    $\kappa = 1$ {\scriptsize \tcp*[h]{learning rate}} \\
    $\delta = 10^{10}$ \\

    \vspace*{3mm}
    \tcp*[h]{Initial guess} \\
    $a_0 = {\rm max}\left( \frac{1}{2} + \frac{{\rm slogx}/n}{2(1-{\rm slogx}/n - {\rm sclogx}/n)} , 10^{-4}\right)$ \\

    \vspace*{4mm}
    \tcp*[h]{Newton-Raphson update} \\

    \While($\quad \quad \quad \quad$  {\scriptsize \tcp*[h]{we set $\epsilon = 0.001$}}){$\delta < \epsilon$}{

        $\displaystyle g = n(\phi(a_0) - \phi(a_0+b)) - {\rm slogx} + \frac{2a_0}{(1+a_0^2)}$ {\scriptsize \tcp*[h]{Gradient of the target density}}

        $\displaystyle H = n(\phi'(a_0) - \phi'(a_0+b)) + \frac{2}{(1+a_0^2)} - \frac{4a_0^2}{(1+a_0^2)^2}$ {\scriptsize \tcp*[h]{Hessian of the target density}}

        $ a_{\rm new} = a_0 - \kappa g/H$

        \vspace*{3mm}
        \If{$a_{\rm new} < 0$}{
            $\kappa = c\kappa$
        }

        \vspace*{3mm}

        $\delta = |a - a_{\rm new}|/ |a|$

        $a_0 = a_{\rm new}$
    }
    \vspace*{1mm}

    $\displaystyle H = n(\phi'(a_0) - \phi'(a_0+b)) + \frac{2}{(1+a_0^2)} - \frac{4a_0^2}{(1+a_0^2)^2}$ {\scriptsize \tcp*[h]{Hessian at the maximum}}

    $(a_1, a_2, a_3) = \left({\rm max}(a_0 - \sqrt{2/H}, a_0/2), a_0, a_0 + \sqrt{2/H} \right)$
    
    \vspace*{3mm}
    \KwRet{($a_1, a_2, a_3$)}
    }
    \textbf{end Function}
    \caption{Newton–Raphson procedure for estimating the mode and design points.}\label{algo: Newton-Raphson}
\end{algorithm}

\begin{algorithm}
    \DontPrintSemicolon
    \SetKwInOut{KwIn}{Input}
    \SetKwInOut{KwOut}{Output}
    \SetKwComment{Comment}{}{}
    {\footnotesize \tcc{This funtion uses Algorithm \ref{algo: Newton-Raphson} to approximate the distribution's parameters.}}

    \vspace*{3mm}
    \KwIn{Samples from target distribution $x_1, \cdots, x_n > 0$, previous sample $a_{\rm prev}$, shape parameter $b$}
    \KwOut{A single proposed sample from the target distribution, $a_{\rm new}$. }
    
    \vspace*{3mm}
    \SetKwFunction{FMain}{beta\_approx}
    \SetKwProg{Fn}{Function}{:}{}
    \Fn{\FMain{$x_1, \cdots, x_n, a_{\rm prev}, b$}}{
    \vspace*{1mm}
    $(a_1, a_2, a_3)$ = a\_est($x_1, \cdots, x_n, b$) {\scriptsize \tcp*[h]{Get the MLE as starting gamma approximation}}\\
    \vspace*{3mm}
    $\bf X$ = [log($a_1, a_2, a_3$), ($a_1, a_2, a_3$)] {\scriptsize \tcp*[h]{sufficient stats}}\\
    $y_j =  (a_j-1) \sum^n_{i=1} \log (x_i) - n \log {\rm B}(a_j,b) - \log(a_j^2 + 1)$ {\scriptsize \tcp*[h]{log target}}\\
    \vspace*{3mm}

    $\bX$ = $\bX$ - ColumnMeans($\bX$) {\scriptsize \tcp*[h]{Centers the columns of matrix $\bX$}}.\\
    $(\eta_1, \eta_2) = (\bX^{\rm T}\bX)^{-1}\bX^{\rm T} \by$ \\
    $(k, \omega) = (\eta_1 + 1, -1/\eta_2)$ {\scriptsize \tcp*[h]{transform the natural parameter back to shape and scale}}\\

    \vspace*{4mm}
    Sample $a' \sim {\rm Ga}(k, \omega)$; \\
    Compute the acceptance probability 
    \begin{equation*}
        r = {\rm min}\left\{ 1, \frac{p(x')q(x|\btheta^*)}{p(x)q(x'|\btheta^*)} \right\},
    \end{equation*}
    \\
    Sample $u\sim U(0,1)$; \\
    Set new sample to 
    \begin{equation*}
        a_{\rm new} = \begin{cases}
                a' & \text{u<r}\\
                a_{\rm prev} & \text{otherwise}
                    \end{cases}       
    \end{equation*}
    \\
    \vspace*{3mm}
    \KwRet{$a_{\rm new}$}
    }
    \textbf{end Function}
    \caption{Estimation of Beta distribution parameters.}\label{algo: beta_approx}  
\end{algorithm}

\begin{algorithm}
    \DontPrintSemicolon
    \SetKwInOut{KwIn}{Input}
    \SetKwInOut{KwOut}{Output}
    \SetKwComment{Comment}{}{}

    \KwIn{Samples from the target distribution $x_1, \cdots, x_n > 0$, number of desired samples, $n$}
    \KwOut{Drawn posterior conditional samples of the shapes parameters (${\bf a} = a^{(0)}, \cdots, a^{(n)}$) and (${\bf b} = b^{(0)}, \cdots, b^{(n)}$) }

    \vspace*{3mm}
    $a = 1$ \\
    $b = 1$ \\

    ${\rm slogx} = \sum^n_{i = 1}\log(x_i + {\rm eps})$ {\scriptsize \tcp*[f]{eps: machine epsilon (in double precision) to handle $x_i$ being 0 or 1} }\\
    ${\rm slogcx} = \sum^n_{i = 1}\log(1 - x_i + {\rm eps})$ \\

    \vspace*{3mm}
    {\footnotesize \tcc{Initial coordinate-wise maximum likelihood search. Initialize $a$ and $b$ with the approximated mode. Increase the number of iterations if $a$ and $b$ is expected to be very big (i.e. in the magnitude of $10^5$)}}
    \For{$i \in \{1, \cdots ,5\}$}{
        $a^{(0)}$ = a\_est($x_1, \cdots, x_n$, b)[2] {\scriptsize \tcp*[f]{mode is the item at the second index} }\\
        $b^{(0)}$ = a\_est($x_1, \cdots, x_n$, a)[2]
    }

    \vspace*{3mm}
    {\footnotesize\tcp*[h]{Gibbs sampling step}}\\
    \For{$i \in \{1, 2, 3 \cdots\}$}{

        Draw sample $a^{(i+1)}$ using beta\_approx(slogx, slogcx, a, b) \\
        Draw sample $b^{(i+1)}$ using beta\_approx(slogx, slogcx, b, a)

    }

    \vspace*{3mm}
    
    \KwRet{$\left({\bf a} = a^{(0)}, \cdots, a^{(n)}\right)$ {\rm and} $\left({\bf b} = b^{(0)}, \cdots, b^{(n)}\right)$}
    \caption{Gibbs Sampler for approximating the conditional distribution of the Beta distribution parameters}\label{algo: beta_sampler}
\end{algorithm}

\newpage
\bibliographystyle{plainnat}
\bibliography{sample}

\end{document}